\def\preprint{0}
\def\aqis{0}
\def\noappendix{0}
\def\supplematerial{0}
\begin{document}
%%%%%%%%%%%%%%%%%%%%%%%%%%%%%%%%%%%%%%%%%%%%%%%%%%%%%%%%%%%%%
% Begin edit from here
\title{
Error Interference in Quantum Simulation}
\date{\today}

\author{Boyang Chen}
% \email[]{chen-by19@mails.tsinghua.edu.cn}
\affiliation{Department of Computer Science and Technology, Tsinghua University, Beijing 100084, China}
\author{Jue Xu}
% \email[]{xujue@connect.hku.hk}
\affiliation{QICI Quantum Information and Computation Initiative, School of Computing and Data Science, The University of Hong Kong, Pokfulam Road, Hong Kong}
\author{Xiao Yuan}
\email[]{xiaoyuan@pku.edu.cn}
\affiliation{Center on Frontiers of Computing Studies, Peking University, Beĳing 100871, China}
\affiliation{School of Computer Science, Peking University, Beĳing 100871, China}
\author{Qi Zhao}
\email[]{zhaoqi@cs.hku.hk}
\affiliation{QICI Quantum Information and Computation Initiative, School of Computing and Data Science, The University of Hong Kong, Pokfulam Road, Hong Kong}

\begin{abstract}

Understanding algorithmic error accumulation in quantum simulation is crucial due to its fundamental significance and practical applications in simulating quantum many-body system dynamics. Conventional theories typically apply the triangle inequality to provide an upper bound for the error. However, these often yield overly conservative and inaccurate estimates as they neglect error interference~---~a phenomenon where errors in different segments can destructively interfere.
Here, we introduce a novel method that directly estimates the long-time algorithmic errors with multiple segments,
thereby establishing a comprehensive framework for characterizing algorithmic error interference. We identify the sufficient and necessary condition for strict error interference and introduce the concept of approximate error interference, which is more broadly applicable to scenarios such as power-law interaction models, the Fermi-Hubbard model, and higher-order Trotter formulas. Our work demonstrates significant improvements over prior ones and opens new avenues for error analysis in quantum simulation, offering potential advancements in both theoretical algorithm design and experimental implementation of Hamiltonian simulation.

\end{abstract}
\maketitle

\section{Introduction}
Simulating quantum dynamics is a central application of quantum computation \cite{feynmanSimulatingPhysicsComputers1982, nielsenQuantumComputationQuantum2010}. 
Efficient quantum algorithms have been proposed for realizing the Schr\"odinger equation of a general quantum system~\cite{ciracGoalsOpportunitiesQuantum2012,georgescuQuantumSimulation2014, daleyPracticalQuantumAdvantage2022}, a notoriously difficult task for classical computers. These algorithms have been applied for various tasks, such as in quantum chemistry~\cite{mcardleQuantumComputationalChemistry2020, PhysRevA.90.022305,babbush2015chemical,babbush2014adiabatic} and quantum field theories~\cite{jordanQuantumAlgorithmsQuantum2012,hamedmoosavianFasterQuantumAlgorithm2018,jordanBQPcompletenessScatteringScalar2018,bauerQuantumSimulationFundamental2023}, and also serve as an indispensable subroutine for other fundamental quantum algorithms, such as quantum phase estimation \cite{kitaevQuantumMeasurementsAbelian1995} and the HHL algorithm for solving linear systems \cite{harrowQuantumAlgorithmSolving2009}.
Since Lloyd's proposal of the first digital quantum simulation algorithm based on the product formula  (PF), also known as the Trotter-Suzuki formula, %for $k$-local Hamiltonians  by \cite{lloydUniversalQuantumSimulators1996}, 
novel techniques such as Linear Combination of Unitaries (LCU) \cite{childsHamiltonianSimulationUsing2012,berryExponentialImprovementPrecision2014,berrySimulatingHamiltonianDynamics2015} and Quantum Signal Processing (QSP) \cite{lowOptimalHamiltonianSimulation2017,lowHamiltonianSimulationQubitization2019,gilyenQuantumSingularValue2019} have been developed. These advancements have led to algorithms with optimal or nearly optimal dependence on several key parameters \cite{berryEfficientQuantumAlgorithms2007, berryHamiltonianSimulationNearly2015,childsNearlyOptimalLattice2019,lowOptimalHamiltonianSimulation2017,lowHamiltonianSimulationQubitization2019,gilyenQuantumSingularValue2019,gilyenQuantumSingularValue2019}.
Nevertheless, variants of the product formula are still promising and popular candidates for near-term quantum devices \cite{kimEvidenceUtilityQuantum2023} or intermediate problems \cite{childsFirstQuantumSimulation2018} due to their simplicity, mild hardware requirements, and decent performance in practice.

For a given Hamiltonian $H = \sum_{l=1}^LH_l$ with $L$ terms of $H_l$, the key idea of the product formula method is to divide the long-time evolution $e^{-\ii Ht}$ into several segments $r$ and approximate the short-time evolution $e^{-\ii  Ht/r}$ using the $p$th-order product formula (PF$p$) denoted $\pf_p(t)$ \cite{suzukiGeneralTheoryFractal1991}. 
For example, the PF1 is defined as $\pf_1(t/r):= e^{-\ii H_1 t/r} e^{-\ii H_2 t/r } \cdots e^{-\ii H_L t/r} $ and the algorithmic (Trotter) error of each segment is given by $\epsilon=\|\pf_1(t/r)-e^{-\ii Ht/r } \|$. 
Understanding the Trotter error is an important problem,  providing the basis for analyzing and optimizing the performance of the algorithm.  
A tighter error analysis can help save gate costs in both Hamiltonian simulation \cite{childsFirstQuantumSimulation2018,childsNearlyOptimalLattice2019,childsTheoryTrotterError2021} and other product-formula-related tasks, such as imaginary time evolution \cite{motta2020determining}, quantum Monte Carlo \cite{bravyi2015monte,bravyi2017poly}, quantum adiabatic algorithms \cite{kovalskySelfhealingTrotterError2023}, and quantum phase estimation \cite{yiSpectralAnalysisProduct2021}. 
Many efforts \cite{suNearlyTightTrotterization2021,sahinogluHamiltonianSimulationLowenergy2021,zhaoHamiltonianSimulationRandom2021,zhaoExploitingAnticommutationHamiltonian2021,heylQuantumLocalizationBounds2019,tranDestructiveErrorInterference2020,tranFasterDigitalQuantum2021} have been devoted to have tighter upper bounds of $\epsilon$. However, the total Trotter error estimation is occasionally loose, as it is crudely upper-bounded by $r\epsilon$ through the triangle inequality, which overlooks destructive error interference between different segments.

Interestingly, it has been recently found that Trotter errors from different segments do destructively interfere in the PF1. Thus, the total error may not increase linearly with the number of segments $r$ \cite{tranDestructiveErrorInterference2020}. This error interference phenomenon can be explained from a PF2 perspective \cite{laydenFirstOrderTrotterError2022} and find various applications, e.g., quantum adiabatic algorithms \cite{kovalskySelfhealingTrotterError2023} and quantum phase estimation \cite{yiSpectralAnalysisProduct2021}. However, this interesting phenomenon has only been investigated in a highly specific case, i.e., the two-term Hamiltonian $H=A+B$ with PF1, and its generalization and systematic understanding remain limited. 
In reality, error interference is more prevalent, even in simple cases such as simulating power-law decaying interaction Hamiltonians with PF1 \cite{zhao2022random}.

In this work, we establish a general framework for analyzing algorithmic Trotter error interference in quantum simulation. The framework is based on a novel method that directly estimates the algorithmic Trotter error for a long time evolution, allowing us to derive the necessary and sufficient condition for strict error interference. 
Next, we give a general theoretical lower bound for algorithmic error accumulation, which in various cases excludes the possibility of error interference, e.g., 1D nearest-neighbor Heisenberg model with PF2, and generic Hamiltonians with polynomial series expansion algorithms. 
As applications, we provide more examples with (approximate) error interference phenomena beyond the $H=A+B$ case, including $H=A+B+C$ cases in the Heisenberg model, Fermi-Hubbard model, and power-law interaction models. 
Our results yield significantly improved error bounds compared to the triangle bounds in these cases.
Interestingly, approximate error interference also exists in higher-order product formula approximation of perturbative evolution and can lead to speed-ups in some regions when implementing digital adiabatic algorithms. 
Our results are fundamentally interesting~---~furthering our understanding of the product formula theory, and practically useful~---~directly applicable to the implementation of Hamiltonian simulation using realist quantum hardware. Leveraging our error interference bounds, our results may also inspire better algorithm designs to reduce the total error~\cite{tranFasterDigitalQuantum2021}. 

\ifnum\preprint=1
    \input{theory}
\else
    \section{Interference theory}

\subsection{Trotter error analysis based on the triangle inequality}
We first review the class of Trotter error analysis methods based on the triangle inequality. 
Consider Hamiltonian $H = \sum_{l=1}^LH_l$, then the first-order product formula (PF1), also known as the Trotter-Suzuki formula \cite{suzukiGeneralTheoryFractal1991, lloydUniversalQuantumSimulators1996}, can approximate the quantum dynamics $e^{-\ii H \delta t }$ with the product of the dynamics of each term, that is
\begin{equation}
    \pf_1(\delta t):= e^{-\ii H_1 \delta t} e^{-\ii H_2 \delta t} \cdots e^{-\ii H_L\delta  t} 
    = \prod^{\longrightarrow}_l e^{-\ii H_l\delta  t}. 
    \label{eq:pf1}
\end{equation}
As the Hamiltonian terms $H_l$ may not commute with each other in general, the introduced approximation error (we will call it the Trotter error throughout the paper) is upper bounded as 
$\norm{\pf_1(\delta t) - e^{-\ii H \delta t}} =\mathcal O(\alpha_{\cmm} \delta t ^2)$ \cite{lloydUniversalQuantumSimulators1996}, 
where $\norm{\cdot}$ is the spectral norm and $\alpha_{\cmm}=\sum_{l_1, l_2=1}^L \norm{\qty[H_{l_2}, H_{l_1}]}$. 
One can suppress the Trotter error using a higher-order product formula. 
The second-order product formula can be defined as $\pf_2(\delta t):= \prod^{\rightarrow}_l e^{-\ii H_l\delta t/2} \prod^{\leftarrow}_l e^{-\ii H_l\delta t/2}$,
where $\prod_l^{\leftarrow}$ denotes a product in the reverse order. More generally, a $p$th-order Suzuki product formula (PF$p$), denoted $\pf_p(\delta t)$ \cite{suzukiGeneralTheoryFractal1991} can be defined recursively. 

For a long-time evolution $e^{-\ii H t}$, we can generally divide the whole evolution into $r$ segments and apply PF with time duration $\delta t=t/r$ for $r$ times, i.e., $\pf_1^r(t/r)$. 
In conventional error analyses, the triangle inequality is applied and the total error is therefore bounded as
\begin{equation}
    \norm{\pf_p^r(t/r) - e^{-\ii tH }} \le r  \norm{\pf_p(t/r) - e^{-\ii H t/r }}.%= \bigO\qty(r \alpha_{comm} \delta t ^2).
    \label{eq:triangle_bound}
\end{equation}
Numerous studies have concentrated on obtaining more precise bounds for the Trotter error in each segment; however, the overall error cannot be considered stringent in general, as long as the triangle inequality is employed.
In particular, errors in different segments may interfere with each other 
and the triangle-inequality error bound becomes loose. For example, Trotter error interference has been theoretically proved in \cite{tranDestructiveErrorInterference2020,laydenFirstOrderTrotterError2022} and numerically found for Hamiltonians with a power-law rapidly decaying interactions \cite{zhaoHamiltonianSimulationRandom2021}. Yet, the theories in \cite{tranDestructiveErrorInterference2020,laydenFirstOrderTrotterError2022} only work for the special case of PF1 for $H=A+B$. It remains an open problem to quantify Trotter error interference for general Hamiltonians.

\subsection{Framework of Trotter error interference}\label{sec:framework-interference}
Now, we introduce our framework for Trotter error interference. 
We first give its informal definition as follows. 
\begin{definition}[Error interference, informal]\label{def:interference}
For a $p$th-order product formula $\pf_p(t/r)$ approximating $e^{-\ii Ht/r }$, if the total error grows sublinearly to the sum of error of each Trotter, i.e.,
\begin{equation}
    \norm{\pf_p^r(t/r) - e^{-\ii Ht}} = o\qty( r\norm{\pf_p(t/r) - e^{-\ii  H(t/r)}} ),
\end{equation}
where $\norm{\cdot}$ denotes the spectral norm and 
when $r$ is large, then we say that the Trotter error \emph{interferes}.
\end{definition}

To analyze Trotter error interference, we introduce a novel method that directly estimates the Trotter error for a long time evolution. 
We first express the PF evolution by an effective Hamiltonian $H_{\eff}$, i.e.,
\begin{equation}
    e^{-\ii H_{\eff}\delta t} = \pf_p(\delta t) = e^{-\ii(H+H_{\err})\delta t},
\end{equation}
where $H_{\err}$ represents the coherent error of the Hamiltonian with PF. 
For a PF$p$, 
the error term $H_{\err}$ is $\mathcal O(\delta t^{p})$ when $H$ is fixed. 
We can rewrite this error as $H_{\err}=R\delta t^{p}+ R_{\re}\dt^{p+1}$,
where $R$ is the leading-order terms in Trotter error. The high-order remainder, $R_{\re}$, possesses an upper bound independent of $\dt$. 
Compared to conventional ways that estimate the difference between the evolution operators $\pf_p(\delta t)$ and $e^{-\ii H\delta t}$, our new method focuses on the coherent error $H_{\err}$ and its accumulation over time, which can be more naturally accounted for.   

For clarity, we will calculate $R$ and $R_{\re}$ exactly for the first order product formula with two terms. As for PF1 with two terms $\pf_1(\dt) = e^{-\ii H_1 \dt} e^{-\ii H_2 \dt}$, the leading error and the higher-order remaining error can be expressed as 
\begin{align*}
    R &= \frac 12 [H_1, H_2],\\
    \norm{R_{\re}} &= \bigO \qty(\norm{[H_1,[H_1,H_2]]}+\norm{[H_2,[H_2,H_1]]}).
\end{align*} 

We use a perturbative approach to account for the effect of $H_{\err}$ and error accumulation. We consider the spectral decomposition $H = P \Lambda P^\dag$ with $P \in \mathrm{SU}(d)$ and diagonal $\Lambda$. Here $d=2^n$ is the dimension of the Hamiltonian. 
The diagonal elements of $\Lambda$ are the eigenvalues of $H$, and the columns of $P$ are the corresponding eigenstates. Thus, the ideal evolution can be rewritten as $e^{-\ii H \dt}= P e^{- \ii \Lambda \dt} P^\dag$. 
Now, the approximated evolution $e^{-\ii H_{\eff} \dt}$ has a similar spectral decomposition
\begin{equation}
    e^{-\ii H_{\eff}\dt} = (P+\delta P) e^{- \ii (\Lambda + \delta \Lambda) \dt} (P+ \delta P)^\dagger,
\end{equation}
% $\exp(-\ii\delta t H_{\text{eff}}) = (P+\delta P) \exp (- \ii \delta t(\Lambda + \delta \Lambda) ) (P+ \delta P)^\dagger $
where $\delta P$ and $\delta \Lambda$ are deviations of eigenvectors and eigenvalues, respectively. 
 In general, we derive that $\delta P$ and $\delta \Lambda$ are both $\mathcal O(\|R\| \delta t^{p})$ and the Trotter error in one segment is $\mathcal O(\delta t \delta \Lambda+ \delta P)=\mathcal O(\|R\| \delta t^{p+1})$.
If we apply the conventional approach using triangle inequality, we would have an upper bound of Trotter error $\mathcal O(r(\delta t\delta \Lambda+ \delta P))$, where $\delta P$ and $\delta \Lambda$ both accumulate with the number of segments. However, we can easily check that the effects of $\delta P$ and $\delta t\delta \Lambda$ should not be treated equally. 
In particular, for a long time evolution, 
the approximated evolution is 
$\exp(-\ii r\delta t H_{\eff}) = (P+\delta P) \exp (- \ii r\delta t(\Lambda + \delta \Lambda) ) (P+ \delta P)^\dagger $, where we have used the unitarity property of $P+\delta P$. Thus, the error in eigenvectors 
remains $\delta P$, regardless of the number of segments $r$, while the error in eigenvalues will accumulate as 
$r\delta t\delta \Lambda$. As a result, the total error scales as $\bigO(\delta P+ r \delta t \delta \Lambda)$. Since $\delta P$ is independent of the total time $t$ and the number of segments, the accumulation of error primarily arises from the second term, $ \delta t\delta \Lambda$. 
%If $\delta \Lambda$ is much smaller than $\delta P$, the total error will show a sublinear accumulation phenomenon. 
We show that when the orthogonality condition is satisfied, $\delta \Lambda$ can be much smaller than expected and hence the total error will show a sublinear accumulation, which is the interference phenomenon for the total error. See \cref{fig:schematic}  for the conceptual diagram of an extreme interference case.  

\begin{figure}
    \centering
    \includegraphics[width=0.9\linewidth]{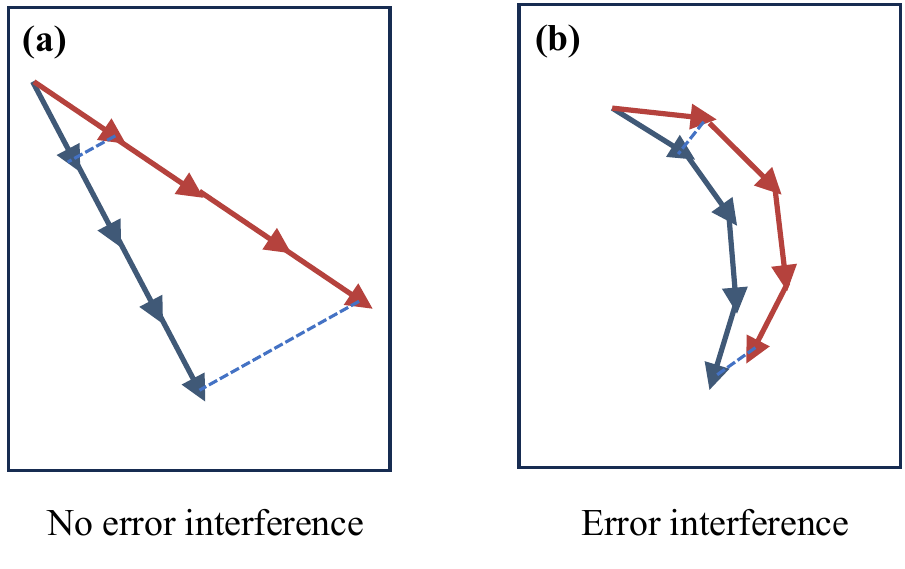}
    \caption{The total error of the simulation can be illustrated by a walk on the plane (a) When the error does not interfere, i.e., $\delta \Lambda$ is the major error term, then the error accumulates linearly as the trotter step grows, the triangle inequality gives a tight bound. (b) When the error interferes, i.e., $\delta P$ is the major error term, then only the first error contributes to the error, the error accumulates sublinearly, the triangle inequality gives a loose bound.}
    \label{fig:schematic}
\end{figure}

\begin{definition}[Orthogonality condition]\label{def:orthogonality}
Consider a given Hamiltonian evolution $e^{-\ii H\dt}$ and its approximation $e^{-\ii H_{\eff}\dt}$.
If the leading-order term $R$ of $H_{\err}=H_{\eff}-H$ satisfies
\begin{align}\label{eq:orthogonality}
    \braket{\psi_i|R|\psi_i}=0
\end{align}
for all eigenstates $\ket{\psi_i}$ of $H$. Then
we say this approximation $e^{-\ii H_{\eff}\dt}$ satisfies an orthogonality condition.
\end{definition}

\begin{theorem}[Necessary and sufficient condition for error interference]\label{thm:fast-convergence}
    The orthogonality condition is a necessary and sufficient condition for Trotter error interference. 
    A PF$p$ evolution can be expressed as $\pf_p(\dt) = \exp(-\ii H\dt -\ii R \dt^{p+1} -\ii R_{\re}(\dt)\dt^{p+2})$, 
    where $R$ is independent of $\dt$ and $R_{\re}(\dt)$ has bounded norm when $\dt$ varies. 
    If $R$ satisfies the \crtlnameref{def:orthogonality}, then the total error can be bounded by
    \begin{align}\label{eq:trotter-interference-approx}
         \norm{\pf_p(t/r)^r - e^{-\ii Ht}} = \bigO \qty(\norm{R}\frac {t^p}{r^p} + \norm{R_{\re}} \frac{t^{p+2}}{r^{p+1}}).
    \end{align} 
    If $R$ does not satisfy the orthogonality condition, then the error term is
    \begin{align}
      \norm{\pf_p(t/r)^r - e^{-\ii Ht}} =   \Omega\left(\max_i \big|\braket{\psi_i|R|\psi_i}\big|\frac{t^{p+1}}{r^p}\right).
    \end{align}
\end{theorem}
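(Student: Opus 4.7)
The plan is to exploit the effective-Hamiltonian representation introduced in the preceding paragraphs. Since every segment of the simulation applies the same unitary $\pf_p(\dt) = e^{-\ii H_\eff \dt}$ with $H_\eff = H + R\dt^p + R_{\re}(\dt)\dt^{p+1}$, the $r$-fold product collapses to a single exponential, $\pf_p(\dt)^r = e^{-\ii H_\eff t}$. This is the key structural observation: any apparent linear-in-$r$ accumulation of the Trotter error is an artefact of the triangle inequality and not of the dynamics themselves. First I would introduce the spectral decomposition $H = P\Lambda P^\dagger$ and carry out standard first-order perturbation theory (block-diagonalising $R$ on each degenerate eigenspace of $\Lambda$) to write
\[
    H_\eff = (P+\delta P)(\Lambda + \delta\Lambda)(P+\delta P)^\dagger,
\]
with $\delta\Lambda_{ii} = \langle\psi_i|R|\psi_i\rangle\dt^p + \bigO(\dt^{p+1})$ and $\delta P$ an $\bigO(\|R\|\dt^p)$ perturbation of the eigenbasis.

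The identity
\[
    \pf_p(\dt)^r = (P+\delta P)\,e^{-\ii(\Lambda+\delta\Lambda)t}\,(P+\delta P)^\dagger
\]
then cleanly separates the two sources of error: the basis perturbation $\delta P$ is sandwiched only once and contributes $\bigO(\|R\|\dt^p)$ \emph{independently} of $r$, whereas $\delta\Lambda$ sits in an exponent multiplied by the full time $t$ and contributes at most $t\|\delta\Lambda\|$ via the elementary bound $\|e^{-\ii At}-e^{-\ii Bt}\|\le t\|A-B\|$. Under the orthogonality condition the leading shift $\langle\psi_i|R|\psi_i\rangle\dt^p$ vanishes, so $\|\delta\Lambda\| = \bigO(\|R_{\re}\|\dt^{p+1})$; substituting $\dt = t/r$ delivers the claimed $\bigO(\|R\|t^p/r^p + \|R_{\re}\|t^{p+2}/r^{p+1})$. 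For the converse, if some diagonal element $c_{i^*} := \langle\psi_{i^*}|R|\psi_{i^*}\rangle$ is nonzero, I would choose the maximising eigenstate $\ket{\psi_{i^*}}$ and evaluate the matrix element $\bra{\psi_{i^*}}(\pf_p^r - e^{-\ii Ht})\ket{\psi_{i^*}}$. Perturbation theory gives this matrix element as $e^{-\ii\lambda_{i^*}t}(e^{-\ii c_{i^*} t \dt^p}-1)$ up to lower-order corrections in $\dt$, and $|e^{-\ii c_{i^*} t \dt^p}-1| = \Theta(|c_{i^*}| t \dt^p)$ in the large-$r$ regime, which lower-bounds the operator norm by the stated $\Omega(\max_i|\langle\psi_i|R|\psi_i\rangle|\,t^{p+1}/r^p)$.

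The main obstacle I expect is keeping the perturbative expansion clean when $H$ has degenerate or near-degenerate eigenvalues, since ordinary nondegenerate perturbation theory produces small-denominator terms in $\delta P$. The remedy is to block-diagonalise $R$ within each degenerate subspace of $\Lambda$, so that the orthogonality condition becomes a statement about the block-diagonal part of $R$ in that basis and $\delta P$ stays well-controlled. A secondary technical issue is to verify that higher-order perturbative corrections, notably the second-order eigenvalue shift of size $\|R\|^2\dt^{2p}$ coming from off-diagonal matrix elements of $R$, are dominated by the $\|R_{\re}\|\dt^{p+1}$ term when $p\ge 1$ and $r$ is sufficiently large, and that the norm of $R_{\re}(\dt)$ really is uniformly bounded in $\dt$ so that it can legitimately be absorbed into a constant in the final estimate.
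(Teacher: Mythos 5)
Your proposal is correct and follows essentially the same route as the paper's own proof: represent the repeated Trotter step exactly as $e^{-\ii H_{\eff} t}$, apply first-order spectral perturbation theory so that the eigenbasis rotation $\delta P$ enters only once (size $\bigO(\norm{R}\dt^p)$, independent of $r$) while the eigenvalue shift accumulates as $t\,\delta\Lambda$, with the orthogonality condition killing the first-order shift for the upper bound and a nonzero diagonal element $\braket{\psi_{i}|R|\psi_{i}}$ forcing a phase drift $\Theta(|c_i| t\,\dt^p)$ for the lower bound. The only cosmetic differences are that you absorb $R_{\re}$ into $H_{\eff}$ exactly (the paper instead splits it off with a per-segment triangle inequality) and that you are, if anything, more careful than the paper about degenerate subspaces and the second-order shift $\norm{R}^2\dt^{2p}$.
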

The proof of the theorem can be found in~\cref{apd:sec:appendix-interference-bound}.
% in the Supplementary Material. 
The orthogonality requirement $\braket{\psi_i|R|\psi_i} = 0$ is equivalent to that there exsits a matrix $M$ such that $[H, M] = R$. 
The condition also has a necessary criterion $\Tr (R H^k) = 0, \forall k \ge 1$.The results in \cite{tranDestructiveErrorInterference2020} and \cite{laydenFirstOrderTrotterError2022} can be recovered from this theorem. For a two-term Hamiltonian $H=H_1+H_2$, the leading-order error term in PF1, i.e. $R = [H_1, H_2] = [H_1+H_2, H_2]$, satisfies the orthogonality requirement. 
Consequently, we conclude that the PF1 error interferes and grows sublinearly.

\cref{thm:fast-convergence} is an asymptotic result. In \cref{thm:tight-general-bound} and \cref{thm:exact-pf1-interference} in Methods, we give tighter upper bounds with concrete prefactors. As a corollary of \cref{thm:fast-convergence}, we get a sufficient condition where the error would accumulate linearly.
\begin{corollary}[Error lower bound]\label{cor:lower-bound}
    If the leading-order error term of $\pf_p(t)$, $R$, satisfies  $\Tr(RH^k) \neq 0$ for some $k \in \mathbb{N}$, then $\norm{\pf_p^r(t/r) - e^{-\ii Ht}} = \Omega(\frac{t^{p+1}}{r^p})$ with a sufficient large  $r$.
\end{corollary}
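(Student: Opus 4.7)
The plan is to derive this corollary as a direct logical consequence of the lower-bound half of \cref{thm:fast-convergence}. The key observation is that the hypothesis $\Tr(R H^k) \neq 0$ is incompatible with the orthogonality condition, so the theorem's linear-accumulation regime kicks in.

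First I would establish the contrapositive implication: the orthogonality condition forces $\Tr(R H^k)=0$ for every $k\ge 1$. To see this, I would work in the eigenbasis $\{\ket{\psi_i}\}$ of $H$, in which $H^k$ is diagonal with entries $\lambda_i^k$. Expanding the trace in this basis gives
\begin{equation}
    \Tr(R H^k)=\sum_i \lambda_i^k \braket{\psi_i|R|\psi_i},
\end{equation}
and if $\braket{\psi_i|R|\psi_i}=0$ for every $i$, every summand vanishes. Contrapositively, if there exists some $k$ with $\Tr(R H^k)\neq 0$, then at least one diagonal matrix element $\braket{\psi_{i_\star}|R|\psi_{i_\star}}$ must be nonzero, so $\max_i|\braket{\psi_i|R|\psi_i}|>0$.

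With this in hand, I would invoke the second inequality of \cref{thm:fast-convergence}, which states that whenever the orthogonality condition fails, the total error satisfies
\begin{equation}
    \norm{\pf_p^r(t/r)-e^{-\ii H t}}=\Omega\!\left(\max_i\big|\braket{\psi_i|R|\psi_i}\big|\,\frac{t^{p+1}}{r^p}\right).
\end{equation}
Since the maximum diagonal element is a strictly positive constant depending only on $H$ and the chosen product formula (independent of $r$ and $t$), absorbing it into the $\Omega$ notation gives the claimed bound $\Omega(t^{p+1}/r^p)$.

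There is no real obstacle beyond ensuring the asymptotic regime is valid: \cref{thm:fast-convergence} is stated for large $r$, where the remainder term $R_{\re}(\dt)\dt^{p+2}$ is dominated by the leading $R\dt^{p+1}$ piece, which is precisely why the corollary is stated ``with a sufficiently large $r$.'' The only subtle point to be careful about is that the nonvanishing of some diagonal element $\braket{\psi_{i_\star}|R|\psi_{i_\star}}$ could be partially cancelled by the $R_{\re}$ contribution at small $r$, but for $r$ large enough the leading term provably dominates, so the $\Omega$ bound survives. I would close by noting the natural equivalence: $\braket{\psi_i|R|\psi_i}=0$ for all $i$ is the same as the existence of a matrix $M$ with $[H,M]=R$ (restricted to nondegenerate spectra one reads off $M$ elementwise), so the trace criterion $\Tr(R H^k)\neq 0$ is a convenient diagonalization-free sufficient witness that $R$ cannot be written as such a commutator.
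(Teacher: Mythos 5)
Your proposal is correct and follows essentially the same route as the paper: observe that $\Tr(RH^k)\neq 0$ is incompatible with the orthogonality condition, then invoke the lower-bound half of \cref{thm:fast-convergence} with $\max_i|\braket{\psi_i|R|\psi_i}|>0$ absorbed into the $\Omega$. The only cosmetic difference is that the paper verifies the trace-vanishing implication via cyclicity with $R=[H,M]$, while you expand $\Tr(RH^k)=\sum_i\lambda_i^k\braket{\psi_i|R|\psi_i}$ directly in the eigenbasis — an equivalent argument.
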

\noindent As a direct application, we find that for many physical dynamics, the error of PF2 does not interfere. For instance, consider a one-dimensional Ising model, $H = \sum_{j=1}^{n-1}X_jX_{j+1} + \sum_{j=1}^nZ_j$ where $H \coloneqq A+B$, $A$ and $B$ are $X$ and $Z$ terms correspondingly.
It can be confirmed easily that $\Tr (HR) \neq 0$, where $R=\frac{\delta t^3}{12} [A-B, [A, B]]$ is the leading-order error term, thus the error does not interfere. A more detailed discussion can be found in \cref{sec:non-interference-pf2}.

\cref{thm:fast-convergence} can be generalized to any Hamiltonian simulation algorithm; however, it is primarily beneficial for product formulas. The error typically accumulates linearly for other types of simulation algorithms like LCU or QSP. This is due to the equivalent error in LCU or QSP being a function of the Hamiltonian $H$, which does not fulfill the orthogonality condition.

\begin{figure*}[!t]
    \centering
    \sidesubfloat[]{\includegraphics[width=.45\linewidth]{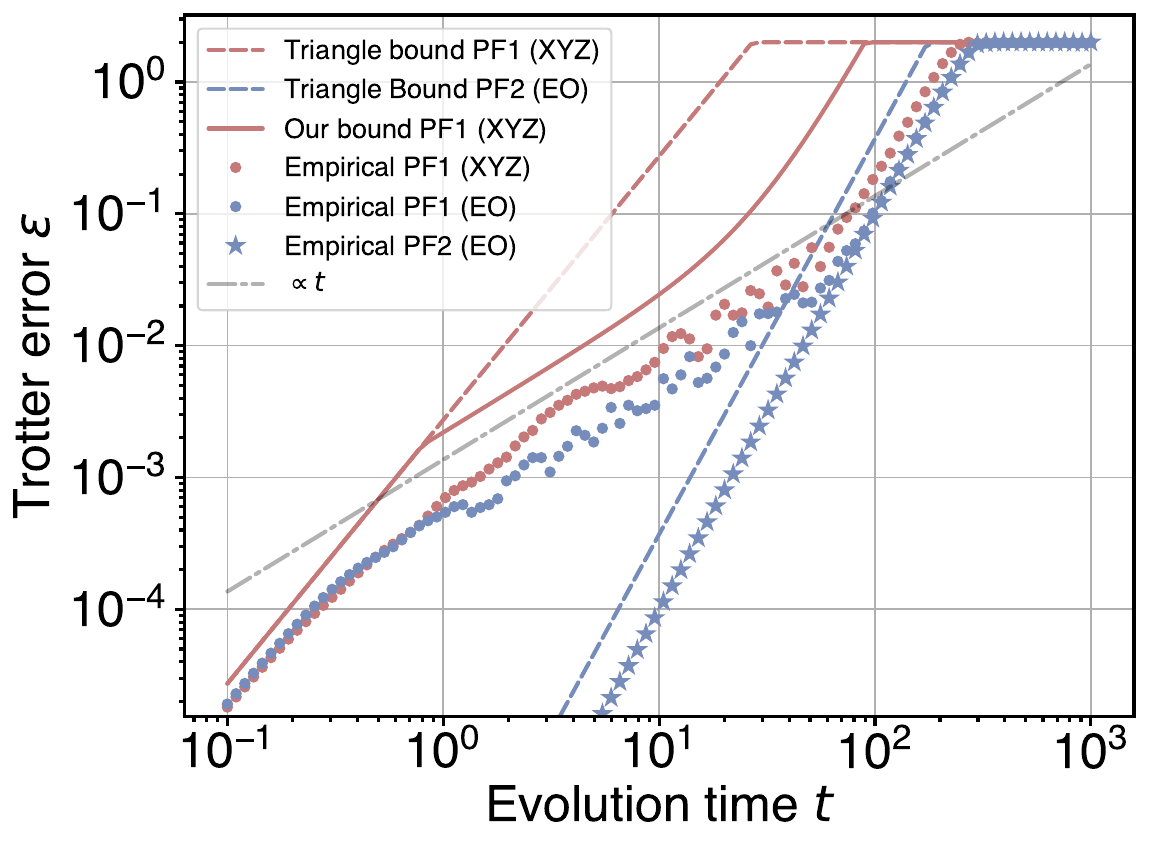}}
    \hfill
    \sidesubfloat[]{\includegraphics[width=.45\linewidth]{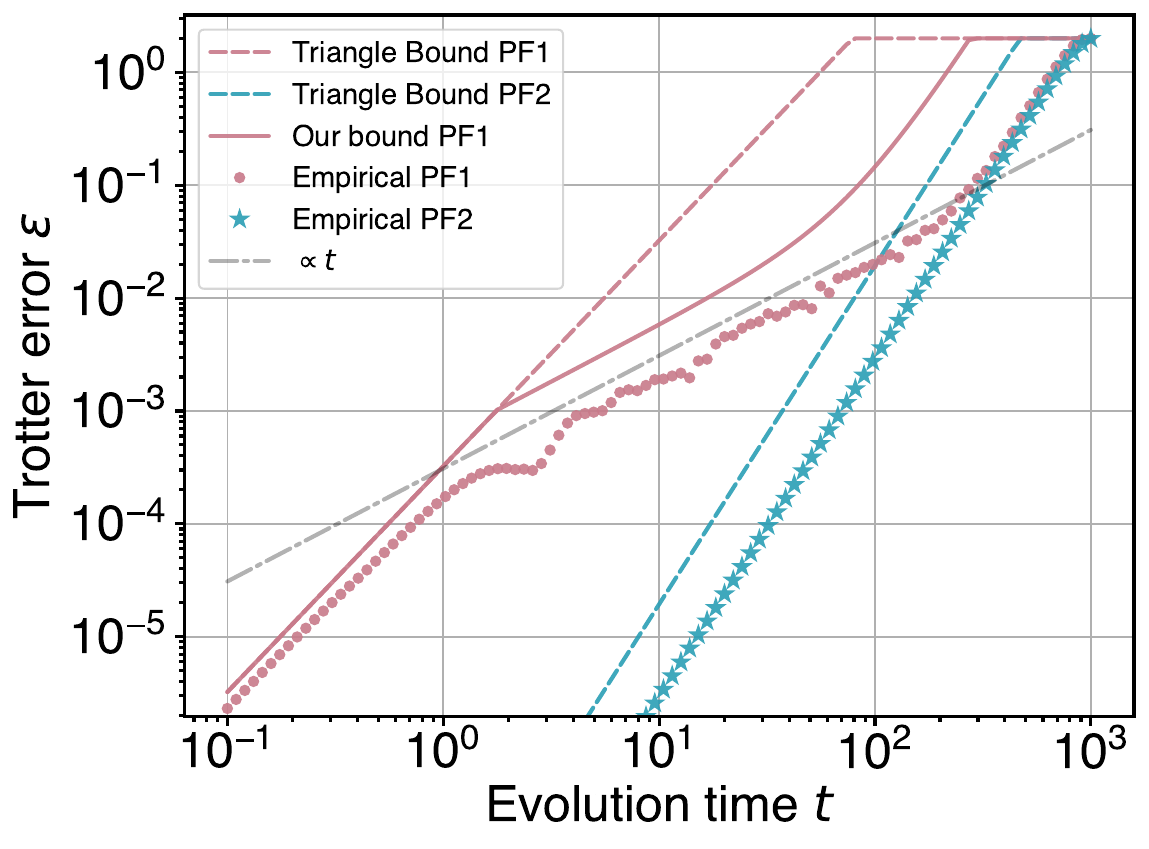}}
    \caption{
    Trotter error VS evolution time $t$ with fixed Trotter step $r=10000$.
    (a) The 1D nearest-neighbor Heisenberg model \cref{eq:heisenberg} of 8 qubits with parameters $J_x=J_y=J_z=2$, $h=0.5$. 
    The red dots are the empirical PF1 Trotter error with the XYZ-group and the red solid line is our interference bound.
    The grey dashed line is plotted proportionally to $t$ as a guide for the eye.
    The empirical error and the triangle bound of PF2 with the even-odd EO-group are also plotted for reference. 
    (b) The 1D Fermi-Hubbard Hamiltonian \cref{eq:hubbard} naturally partitioned into three groups \cref{eq:hubbard_group}.
    We choose the Hamiltonian of 4 sites (represented by 8 qubits)
    and set the interaction parameters as $u=-v=1$. 
    The PF1 empirical error and bounds are plotted in red similar to (a), 
    while the PF2 empirical error and the triangle bound are plotted for comparison.
    }
    \label{fig:tri_group}
\end{figure*}

\cref{cor:lower-bound} can be employed to demonstrate the tightness of upper bounds based on the triangle inequality. If the error is found to be non-interfering, the total error can be estimated using $r\cdot \epsilon$, where $\epsilon$ represents the error in one Trotter step. This approach can significantly reduce the computational resources needed to estimate the minimum required Trotter steps.

Certifying whether the error of evolution satisfies the \crtlnameref{def:orthogonality} is generally challenging. 
However, computational resources can be conserved if the focus is solely on the error of a specific input state, such as eigenstates. 
The subsequent corollary addresses the interference of eigenstates' evolution, expanding upon the results found in \cite{yiSpectralAnalysisProduct2021}.
 
\begin{corollary}
\label{thm:interference-state}
  Consider a $p$th-order product formula $\pf_p(\dt) = \exp(-\ii H\dt -\ii R \dt^{p+1} -\ii R_{\re}(\dt)\dt^{p+2})$. If an eigenstate of $H$, $\ket{\psi}$, satisfies $\braket{\psi|R|\psi} = 0$, then the total error of the evolution of $\ket{\psi}$ with $\pf_p(t/r)^r$ is
    \begin{equation}
        \norm{(\pf_p(t/r)^r-e^{-\ii Ht})\ket{\psi}} = \bigO \left( \norm{R}\frac{t^p}{r^p} + \norm{R_{\re}}\frac{t^{p+2}}{r^{p+1}} \right).
    \end{equation}    
\end{corollary}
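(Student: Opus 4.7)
The plan is to adapt the spectral-decomposition framework behind \cref{thm:fast-convergence} but restrict the analysis to the single eigenstate $\ket{\psi}$. First, I would use the fact that a single segment satisfies $\pf_p(\dt) = e^{-\ii H_{\eff}\dt}$ with $H_{\eff} = H + R\dt^p + R_{\re}(\dt)\dt^{p+1}$, so composing $r$ of them gives $\pf_p(\dt)^r = e^{-\ii H_{\eff}t}$ exactly. Writing the spectral decompositions $H = P\Lambda P^\dagger$ and $H_{\eff} = (P+\delta P)(\Lambda+\delta\Lambda)(P+\delta P)^\dagger$ and inserting the hybrid operator $P e^{-\ii(\Lambda+\delta\Lambda)t}P^\dagger$, the error splits additively into an eigenvector-deformation part, $(P+\delta P)e^{-\ii(\Lambda+\delta\Lambda)t}(P+\delta P)^\dagger - Pe^{-\ii(\Lambda+\delta\Lambda)t}P^\dagger$, and an eigenvalue-drift part, $P(e^{-\ii(\Lambda+\delta\Lambda)t}-e^{-\ii\Lambda t})P^\dagger$.

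The eigenvector-deformation part has operator norm $\bigO(\norm{\delta P}) = \bigO(\norm{R}\dt^p) = \bigO(\norm{R}t^p/r^p)$ by the same perturbation estimate used in \cref{thm:fast-convergence}, and this contribution does not require the hypothesis on $\ket{\psi}$. For the eigenvalue-drift part the crucial observation is that it is diagonal in the $P$ basis and $\ket{\psi} = P\ket{e_\psi}$ is a basis vector, so only the single entry $\delta\lambda_\psi$ associated to $\ket{\psi}$ contributes: the drift piece acts on $\ket{\psi}$ as $(e^{-\ii\delta\lambda_\psi t}-1)e^{-\ii\lambda_\psi t}\ket{\psi}$, whose norm is at most $|\delta\lambda_\psi|\,t$. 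First-order Rayleigh--Schr\"odinger perturbation theory applied to the perturbation $V = R\dt^p + R_{\re}\dt^{p+1}$ gives $\delta\lambda_\psi = \braket{\psi|R|\psi}\dt^p + \bigO(\norm{R_{\re}}\dt^{p+1})$, and the hypothesis $\braket{\psi|R|\psi}=0$ kills the leading term, leaving $|\delta\lambda_\psi|\,t = \bigO(\norm{R_{\re}}t^{p+2}/r^{p+1})$. The triangle inequality then combines both pieces into the claimed bound.

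The main obstacle I anticipate is verifying that the higher-order Rayleigh--Schr\"odinger corrections to $\delta\lambda_\psi$ do not spoil the prefactor. The second-order correction is schematically $\sum_{j\neq\psi}|\braket{\psi_j|V|\psi}|^2/(\lambda_\psi-\lambda_j)$, which naively carries both a spectral-gap dependence and a $\norm{R}^2\dt^{2p}$ prefactor. I would handle this either by restricting attention to the regime of sufficiently small $\dt$, where this contribution is subleading to the retained $\norm{R_{\re}}\dt^{p+1}$ and so can be absorbed into the implicit big-O constant (consistent with how \cref{thm:fast-convergence} treats analogous remainders), or by replacing the termwise perturbation expansion with a contour-integral/resolvent estimate that avoids individual spectral gaps. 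A secondary subtlety is the possible degeneracy of $\lambda_\psi$, which is handled by performing the expansion inside the degenerate subspace; since $\braket{\psi|R|\psi}$ remains the corresponding diagonal entry of the perturbation matrix there, the argument goes through unchanged.
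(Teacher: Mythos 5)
Your proposal is correct and follows essentially the same route as the paper's own proof: pass to the effective Hamiltonian, split the error on $\ket{\psi}$ into an eigenvector-deformation piece of size $\bigO(\norm{R}t^p/r^p)$ and an eigenvalue-drift piece $|\delta\lambda_\psi|\,t$, and use first-order perturbation theory with $\braket{\psi|R|\psi}=0$ to kill the leading drift. Your extra care about second-order Rayleigh--Schr\"odinger corrections and degeneracies is a refinement the paper's proof glosses over (it simply records the residual drift as $\bigO(h^2 t)$ and absorbs it), but it does not change the argument.
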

\noindent We refer to~\cref{apd:sec:pf-interference-state}. Similar to \cite{yiSpectralAnalysisProduct2021}, the evolution of eigenstates can lead to speed-ups in quantum phase estimation. 
Interference phenomena also occur when considering the long-time average of observables. This corollary characterizes such error interference and shares similar technique with \cite{heylQuantumLocalizationBounds2019} to characterize simulation error with Fourier transform of operators. 
\begin{corollary}
     Consider a $p$th-order product formula $\pf_p(\dt) = \exp(-\ii H\dt -\ii R \dt^{p+1} -\ii R_{\re}(\dt)\dt^{p+2})$.  
     Let $\ket{\psi}$ be any state. If for any $n$
    \begin{equation}
        \lim_{T \to \infty} \frac 1T \int_0^T \braket{\psi|e^{\ii Ht} RH^n e^{-\ii Ht}|\psi} \dd t = 0,
      \end{equation}
    then the total error of the evolution of $\ket{\psi}$ under the dynamics $\pf_p(t/r)^r$ is
    \begin{equation}
        \norm{(\pf_p(t/r)^r-e^{-\ii Ht})\ket{\psi}} = \bigO \left( \norm{R}\frac{t^p}{r^p} + \norm{R_{\re}}\frac{t^{p+2}}{r^{p+1}} \right).
    \end{equation}
\end{corollary}
\noindent The proof of the corollary can be found in~\cref{apd:sec:pf-interference-state}.

In \cref{thm:fast-convergence}, we present the criteria for exact destructive error interference. However, in certain cases, the error may not exhibit complete destructive interference; instead, only a significant portion of the error may interfere. In the following part, we develop a theory to address this approximate interference of Trotter errors.

\ifnum\aqis=0
Suppose that $H$ includes an error term $R$ and that both $H$ and $R$ fail to meet the condition specified in \cref{thm:fast-convergence}. However, if a substantial proportion of $R$ and $H$ adhere to the criterion in \cref{thm:fast-convergence}, then the error will accumulate at a slower pace than what is predicted by the triangle inequality. This idea is formalized in the following theorem.
\begin{theorem}\label{thm:approx-interference-bound}(Approximate error interference bound)
    Consider a $p$th-order product formula $\pf_p(\dt) = \exp(-\ii H\dt -\ii R \dt^{p+1} -\ii R_{\re}(\dt)\dt^{p+2})$. 
    If the leading-order error term $R=R_1+R_2$, and $R_1$ satisfies the \crtlnameref{def:orthogonality}, then the total error $\norm{\pf_p(t/r)^r - e^{-\ii Ht}}$ can be bounded by
    \begin{align}
        \bigO\left( \norm{R_2}\frac{t^{p+1}}{r^p} +
        \norm{R_1}\frac{t^p}{r^p} + \norm{R_{\re}} \frac{t^{p+2}}{r^{p+1}} \right).
    \end{align}
\end{theorem}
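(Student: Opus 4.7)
The plan is to decompose the approximation error into an ``interfering'' contribution from $R_1$ and a linearly accumulating contribution from $R_2$, by inserting an auxiliary unitary that carries only the $R_1$ piece of the leading error. Specifically, define
\begin{equation}
    V(\dt) := \exp\!\bigl(-\ii H\dt - \ii R_1 \dt^{p+1} - \ii R_{\re}(\dt)\dt^{p+2}\bigr),
\end{equation}
so that $V(\dt)$ coincides with $\pf_p(\dt)$ except that $R_2$ has been dropped from the leading-order error. The triangle inequality gives
\begin{equation}
    \norm{\pf_p(t/r)^r - e^{-\ii Ht}} \le \norm{\pf_p(t/r)^r - V(t/r)^r} + \norm{V(t/r)^r - e^{-\ii Ht}},
\end{equation}
so it suffices to bound the two terms separately.

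For the second term, the effective leading error of $V(\dt)$ is $R_1$, which satisfies the orthogonality condition by hypothesis. I would apply \cref{thm:fast-convergence}---or more precisely, the abstract form of its proof, which relies only on the spectral perturbation analysis of $\exp(-\ii H_{\eff}\dt)$ together with the orthogonality of the leading-error term and does not require $V$ to be any particular product formula---to conclude
\begin{equation}
    \norm{V(t/r)^r - e^{-\ii Ht}} = \bigO\!\left(\norm{R_1}\frac{t^p}{r^p} + \norm{R_{\re}}\frac{t^{p+2}}{r^{p+1}}\right).
\end{equation}
For the first term, a one-step comparison via the BCH formula gives $\norm{\pf_p(\dt) - V(\dt)} \le \norm{R_2}\dt^{p+1} + \bigO(\dt^{p+2})$, since the generators of the two unitaries differ exactly by $R_2\dt^{p+1}$ at the leading discrepant order. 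A telescoping argument using the unitarity of $\pf_p(t/r)$ and $V(t/r)$ then yields $\norm{\pf_p(t/r)^r - V(t/r)^r} \le r\norm{\pf_p(t/r) - V(t/r)} = \bigO(\norm{R_2} t^{p+1}/r^p)$, and summing the two pieces produces the claimed bound.

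The main obstacle is justifying the invocation of the interference machinery on the auxiliary $V(\dt)$, which is not literally a product formula but only an exponential of the algebraic form assumed in \cref{thm:fast-convergence}. One has to confirm that the spectral perturbation estimates underlying that theorem---namely $\delta P = \bigO(\norm{R_1}\dt^p)$ together with a vanishing leading contribution to $\delta\Lambda$ under the orthogonality hypothesis---depend only on this algebraic structure rather than on any particular product-formula origin. A secondary subtlety is the higher-order BCH corrections that appear when one rewrites the full generator of $\pf_p(\dt)$ as the generator of $V(\dt)$ plus a pure $R_2$ perturbation: these must be absorbed into a redefined remainder that is still norm-bounded uniformly in $\dt$, so that the $\norm{R_{\re}} t^{p+2}/r^{p+1}$ term in the final estimate retains an honest, $r$-independent prefactor.
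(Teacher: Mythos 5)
Your proposal is correct and takes essentially the same route as the paper: the paper likewise splits $R=R_1+R_2$, uses a variation-of-parameters (Duhamel) bound to show the $R_2$ piece contributes at most $\norm{R_2}\,t\,(t/r)^p = \norm{R_2}t^{p+1}/r^p$, and invokes the interference machinery of \cref{thm:fast-convergence} for the $R_1$-only effective evolution, the only cosmetic difference being that the paper compares $e^{-\ii t(H+hR)}$ with $e^{-\ii t(H+hR_1)}$ over the full time $t$ instead of introducing your per-step auxiliary unitary $V(\dt)$ and telescoping. Your two flagged obstacles are non-issues: the proof of \cref{thm:fast-convergence} uses only the algebraic form $\exp(-\ii H\dt-\ii R\dt^{p+1}-\ii R_{\re}(\dt)\dt^{p+2})$ and so applies verbatim to $V$, and since the generators of $\pf_p(\dt)$ and $V(\dt)$ differ exactly by $-\ii R_2\dt^{p+1}$, a direct Duhamel estimate gives $\norm{\pf_p(\dt)-V(\dt)}\le\norm{R_2}\dt^{p+1}$ with no BCH corrections to absorb.
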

\noindent The proof of this theorem can be found in~\cref{apd:sec:proof_approximate_interference}.
This result leads us to conclude that for dynamics of any Hamiltonian $H=H_1+H_2$, if the error of the product formula simulating $H_1$ causes interference and $\norm{H_2}$ is relatively small, then the product formula simulating $H$ will also exhibit a comparably small error. This can explain the error interference of power-law rapidly decaying interaction Hamiltonians and biased Hamiltonians with high-order PF methods. 
We present more details of these examples in \cref{sec:application}.

\section{Applications}\label{sec:application}
\subsection{Heisenberg model and Fermi-Hubbard model with tri-group}

The Heisenberg model is a typical quantum many-body lattice model. In previous works~\cite{tran2020destructive,layden2022first}, error interference is only revealed in a bi-group case $H=H_{\even}+H_{\odd}$. 
Here we find that an XYZ tri-group also exhibits the feature of interference. 
In \cref{fig:tri_group} (a), we compare the empirical error and theoretical error bounds of the Heisenberg model with the nearest-neighbor interaction, that is
\begin{equation}
    H_{\nn}
    =H_X+H_Y+H_Z ,
    \label{eq:heisenberg}
\end{equation}
where $H_X=J_x\sum_{j=1}^{n} X_jX_{j+1}$, $H_y=J_y\sum_{j=1}^{n} Y_jY_{j+1}$, and $ H_Z=\sum_{j=1}^{n} \qty(J_z Z_jZ_{j+1}+hZ_j)$ with the periodic boundary condition.
The PF1 with such XYZ tri-group is implemented as
\begin{equation}
    \pf_1(\dt)
    = e^{-\ii H_X\dt}e^{-\ii H_Y\dt}e^{-\ii H_Z\dt}. 
\end{equation}
The interference bound is obtained by the estimation as in \cref{eq:tight-interference-pf1}. 
We can find that our bound matches the empirical error significantly better than the triangle bound.

Another example with a natural tri-group is the Fermi-Hubbard (FH) model \cite{hubbardElectronCorrelationsNarrow1963}, which can describe a wide range of strongly correlated electron physics, including metal-to-insulator transition, quantum magnetism, and unconventional superconductivity \cite{arovasHubbardModel2022,xuCoexistenceSuperconductivityPartially2024}.
Due to its rich properties such as exotic phase transitions, it is an excellent test for quantum simulation \cite{hensgensQuantumSimulationFermi2017, yangObservationGaugeInvariance2020, campbellEarlyFaulttolerantSimulations2022, shaoAntiferromagneticPhaseTransition2024}.
The FH Hamiltonian takes the form
\begin{equation}
    H_{\textup{FH}}
    =v\sum_{j=1}^{L} \sum_{\sigma\in\qty{\uparrow,\downarrow}} 
    a_{j,\sigma}^\dagger a_{j+1,\sigma} + a_{j,\sigma}^\dagger a_{j+1,\sigma}
    + u \sum_j^L n_{j,\uparrow} n_{j,\downarrow}
    \label{eq:hubbard}
\end{equation}
consisting of the nearest-neighbour hopping term and the on-site interaction between electrons with opposite spins.
The FH model on one-dimensional lattice of $L$ sites can be grouped into three terms \cite{schubertTrotterErrorCommutator2023} i.e.,
$H_{\textup{FH}} = H_{\even} + H_{\odd} + H_{\intt}$ which are
\begin{equation}
\begin{aligned}
    H_{\textup{even}} &= v \sum_{j=1}^{\lfloor{L/2\rfloor}} \sum_{\sigma\in\qty{\uparrow,\downarrow}}  a_{2j-1,\sigma}^\dagger a_{2j,\sigma} + a_{2j,\sigma}^\dagger a_{2j-1,\sigma}, \\
    H_{\textup{odd}} &= v \sum_{j=1}^{\lceil{L/2\rceil}-1} \sum_{\sigma\in\qty{\uparrow,\downarrow}} a_{2j,\sigma}^\dagger a_{2j+1,\sigma} + a_{2j+1,\sigma}^\dagger a_{2j,\sigma},\\ 
    % h_{2j+1,2j,\sigma},  \\
    H_{\textup{int}} &= u \sum_j^L n_{j,\uparrow} n_{j,\downarrow},
    \label{eq:hubbard_group}
\end{aligned}
\end{equation}
where $j$ refers to neighboring lattice sites in the first sum, $v$ is the kinetic hopping coefficient, and $u > 0$ the on-site interaction strength. Here,
$a_{j,\sigma}^\dagger$, $a_{j,\sigma}$ and $n_{j,\sigma}=a_{j,\sigma}^\dagger a_{j,\sigma}$ are the fermionic creation, annihilation and number operators, respectively, acting on the site $j$ and spin $\sigma \in \qty{ \uparrow, \downarrow }$.
Similar to the 1D nearest-neighbor Heisenberg model \cref{fig:tri_group} (a), the Trotter error interference can be observed in \cref{fig:tri_group} (b) for this case, and our bound is much better than the triangle bound.

\subsection{Power-law interactions}
We can apply \cref{thm:approx-interference-bound} to the lattice Hamiltonians with power-law interactions. 
More specifically, consider a Heisenberg model with power-law interactions
\begin{equation}
    H_{\pow_\alpha} 
     = \sum_{j=1}^{n-1} \sum_{k=j+1}^n \frac{1}{\abs{j-k}^\alpha}(X_jX_k+Y_jY_k+Z_jZ_k), 
    \label{eq:pow_law}
\end{equation}
where $\alpha>0$ is the decaying coefficient.
This Hamiltonian can be divided into two parts: $H_{\pow_\alpha} = H_{\nn} + H_{\li} $, 
where $H_{\nn}$ refers to the terms acted on a single site or nearest neighbors, and $H_{\li}$ refers to the long-range interactions. 
Then $H_{\li}$ is much smaller than $H_{\nn}$ when $\alpha$ is large. 
The dominating part $H_{\nn}$ satisfies the interference condition. 
According to 
\cref{thm:approx-interference-bound}, the error interferes approximately for a large $\alpha$. 
Our theoretical bound matches the empirical error very well when $\alpha$ is large, supporting the result in \cref{thm:approx-interference-bound}. 

Our analysis also gives a better bound on the required number of Trotter steps, therefore a better resource estimation on the number of total gates. In \cref{fig:power_law} (a), we show the numerical estimation of required Trotter steps $r$, both empirically and theoretically. The empirical error of the simulation grows as $\bigO(n^{2.58})$. The bound by triangle inequalities indicate the number of required $r$ as $\bigO(n^{3.17})$. Our bound could reduce the required Trotter step to $\bigO(n^{2.78})$, which is a better estimation of $r$. Our bound provides approximately 20 times better estimation of the number of Trotter steps over the triangle bound when simulating $50$ qubits, according to the extrapolation of numerical simulation on systems of small size. The interference bound here is the bound from~\cref{thm:exact-pf1-interference}.

\begin{figure}[t]
    \centering
    \includegraphics[width=.95\linewidth]{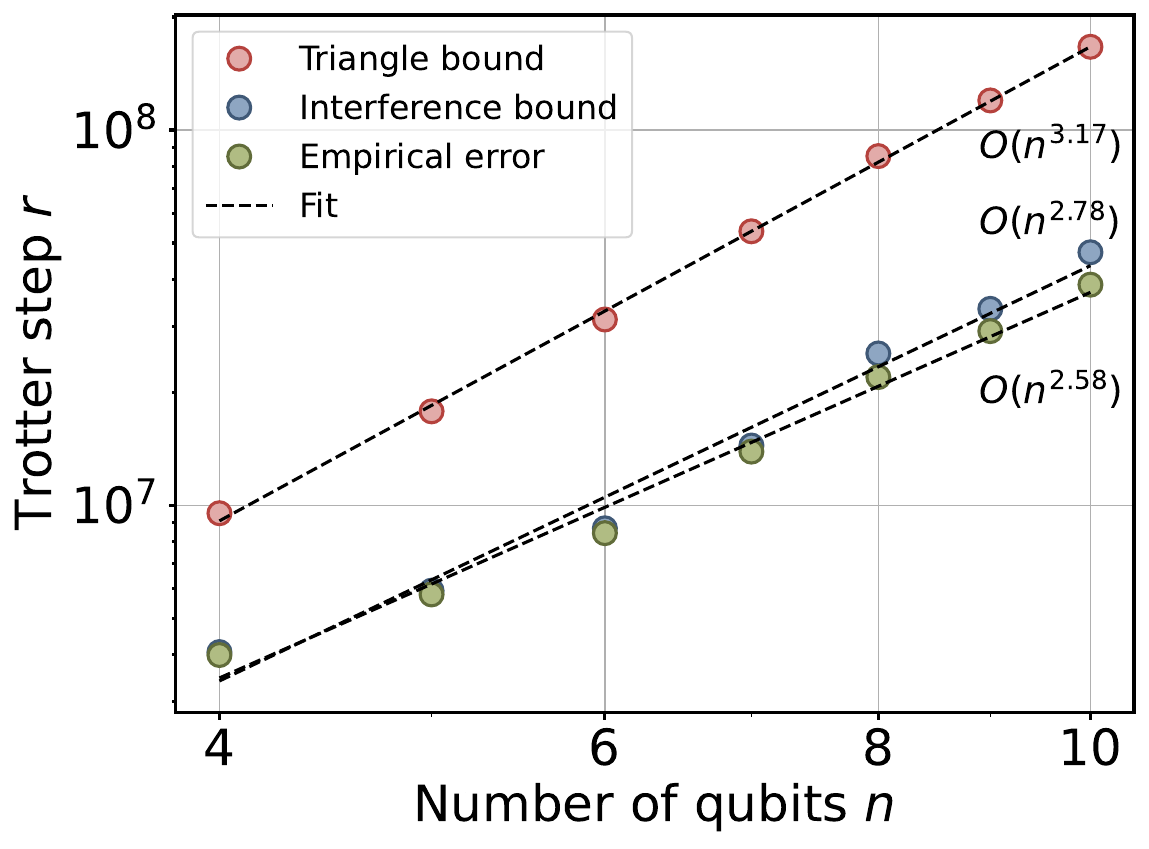}
    \caption{Numerical evaluation of a number of Trotter steps $r$ needed to achieve the fixed error threshold $\epsilon=0.01$ according to different bounds. The simulated quantum system is the 1D lattice power-law Heisenberg model without external magnetic field and the decaying coefficient $\alpha = 4$. The simulation time is 30 times the system size. 
    The empirical curve suggests an $\bigO(n^{2.58})$ scaling of Trotter steps, which is close to our theoretical interference bound $\bigO(n^{2.78})$. 
While the estimate derived from the triangle inequality suggests that the required number of Trotter steps is $\bigO(n^{3.17})$.
    }
    \label{fig:power_law}
\end{figure}

\subsection{High-order PF interference and Adiabatic evolution}
Here in this section, we show the existence of high-order PF interference. 
We consider the 1D transverse-field Ising (TFI) model with the periodic boundary condition
\begin{equation}
    H_{\tfi} 
    = J \sum_{j=1}^{n} X_jX_{j+1} + h\sum_{j=1}^{n}  Z_j,
\end{equation}
which can be grouped into two groups i.e., 
$H_x=J \sum_{j=1}^{n} X_jX_{j+1}$ and $H_z=h\sum_{j=1}^{n}  Z_j$
where each term in the group commutes with each other.

Generally, the second-order product formula (PF2) does not interfere. Thus, the bound by the triangle inequality is tight for most cases of PF2.  As a special case of \cref{thm:approx-interference-bound}, the error in two-term PF2 interferes when $H_1$ is much larger than $H_2$. The leading term of the error of PF2 is $R=\frac{1}{24}[H_1-H_2, [H_1, H_2]] = \frac{1}{24}[H, [H_1, H_2]] - \frac{1}{12}[H_2, [H_1, H_2]]$, where $H=H_1+H_2$. $[H, [H_1, H_2]]$ is off-diagonal in the eigenbasis of $H$ and satisfies the orthogonality condition, while the norm of $[H_2, [H_1, H_2]]$ is much smaller than $R$ when $H_2$ is much smaller than $H_1$. 
The rigorous statement of this idea can be found in \cref{thm:pf2-two-term} in the methods section. 
In \cref{fig:TFI}, we see that the error of PF2 simulation of traverse Ising model interferes when $H_z$ is significantly smaller than $H_x$.

\begin{figure}[!t]
    \centering
    \includegraphics[width=0.95\linewidth]{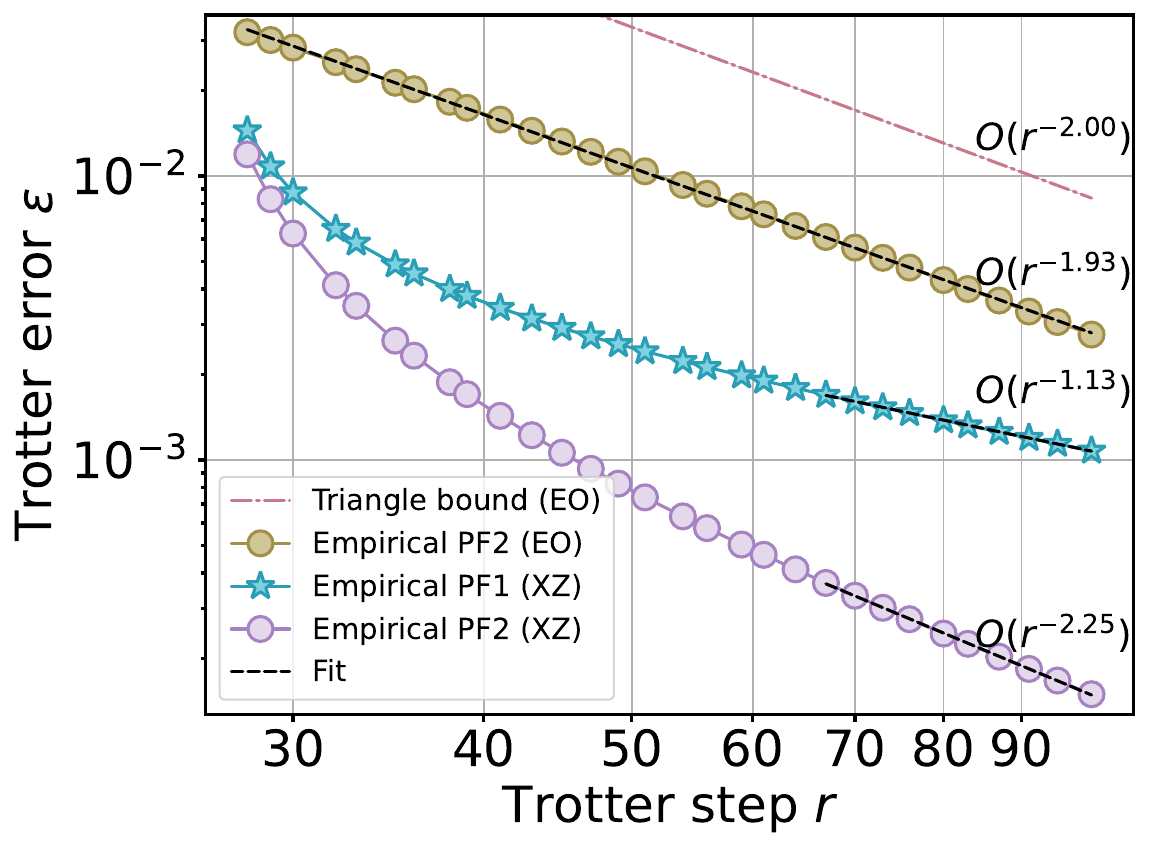}
    \caption{Error interference of both PF1 and PF2 with the XZ-group when the transverse-field Ising (TFI) model with parameters $J=2$ and $h=0.001$.
    The empirical error of PF2 with the even-odd EO-group is plotted for reference.
    }
    \label{fig:TFI}
\end{figure}

Another example is simulating adiabatic evolution where the Hamiltonian consists of two terms, $H[u(\tau)] = (1-u(\tau))H_1+ u(\tau)H_2$. 
Here $u$ is a continuous function satisfying $u(0)=0, u(t)=1$. In \cite{kovalskySelfhealingTrotterError2023}, they show that the error of PF1 of the adiabatic evolution has a self-healing property. Our results suggest that the error interferes when $\tau$ is close to 0 or $t$, even for higher-order PF. 

The biased Hamiltonian condition is also reasonable when simulating perturbation dynamics. It is a natural setup for the simulation of perturbation dynamics where one of the $H_i$'s is significantly larger than all others. A typical perturbation dynamics consists of two dynamics $H=H_1+H_2$, where $H_1$ is large and time-independent, and $H_2$ is small and possibly time-dependent. Thus, our theory also suggests the presence of error interference in the perturbation model. 

\section{Discussion and Outlook}
In this work, we introduced a novel framework for analyzing Trotter error interference in the Product Formula (PF) quantum simulation. By estimating the algorithmic Trotter error for long-time evolution, we derived both necessary and sufficient conditions for strict error interference. Furthermore, we provided a general theoretical lower bound for algorithmic error accumulation, highlighting scenarios where error interference is not possible. Through various examples, we demonstrated the existence of (approximate) error interference phenomena beyond the bipartite Hamiltonian case.

Note that our method is not only applicable to PF methods. We anticipate that it can also be extended to other quantum simulation methods, such as analog simulation, or for handling physical coherent errors, provided they satisfy the conditions for error interference. 

It will be interesting to investigate how we can utilize the error interference phenomena to speed up the quantum simulation process and propose new algorithms. Our current work has shown that error interference can lead to speed-ups in implementing original algorithms. We believe there is potential to further leverage this property to develop more efficient quantum simulation algorithms. 

In conclusion, our work on error interference in PF quantum simulation not only deepens the theoretical understanding of the Trotter error but also provides a practical tool for improving quantum simulation algorithms. We believe that our findings will inspire future research in this area, bringing us closer to realizing the full potential of quantum computation for simulating quantum dynamics. 

\section{Methods}
\subsection{Second order product formula}
For higher order PF, we can show the error interferes when one of $H_l$ is much larger than all others. 
We show only rigorously for two-terms PF2, but similar techniques can be applied to higher-order PF with more terms. As for PF2 with two terms, $\pf_2(\dt) = e^{-\ii H_1 \dt/2}e^{-\ii H_2\dt}e^{-\ii H_1\dt/2}$, the leading error and the high-order remaining error term can be expressed as
\begin{align*}
    R &=  \frac{1}{24}[H_2, [H_2,H_1]] + \frac{1}{24}[H_1, [H_1,H_2]],\\
    \norm{R_{\re}} &= \bigO \qty(\norm{[H_1, [H_2, [H_1, H_2]]]} + \norm{[H_2, [H_1, [H_1, H_2]]]}).
\end{align*}
Thus the error of PF2 interferes only if $R$ and $H$ satisfy the orthogonal condition, or at least approximately satisfy the orthogonal condition. In the following theorem, we show the case that $R$ and $H$ satisfy the orthogonal condition. When one of the $H_i$ is much greater than the other, the error interferes approximately.

\begin{theorem}[Approximate error interference of PF2 of two terms when one term is the major term]\label{thm:pf2-two-term}
    For $H = H_1 + H_2$ and PF2 formula $\pf_2(\tau) = e^{-\ii H_1 \tau /2} e^{-\ii H_2 \tau} e^{-\ii H_1 \tau/2}$, the Trotter error $\norm{\pf_2(t/r)^r - e^{-\ii Ht}} $ can be bounded as
    \begin{equation}\label{eq:approx_pf2}
        \bigO \qty(\norm{[H_1, H_2]} \frac{t^3}{r^3} + \norm{[H_2, [H_1, H_2]]}\frac{t^3}{r^2} + 
        \cml \frac{t^4}{r^3}),
    \end{equation}
    where $\cml = \norm{[H_1, [H_2, [H_1, H_2]]]} + \norm{[H_2, [H_1, [H_1, H_2]]]}$.
\end{theorem}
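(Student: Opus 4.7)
The plan is to apply the approximate interference bound (\cref{thm:approx-interference-bound}) after expanding the leading PF2 error via the Baker-Campbell-Hausdorff (BCH) formula and then decomposing it algebraically into a commutator with $H$ plus a small non-interfering remainder. The starting point is the formula already recorded just before the theorem: expanding $\pf_2(\dt) = e^{-\ii H_1\dt/2} e^{-\ii H_2\dt} e^{-\ii H_1\dt/2}$ via BCH yields $R = \frac{1}{24}\qty([H_1,[H_1,H_2]] + [H_2,[H_2,H_1]])$ and $\norm{R_\re} = \bigO(\cml)$.

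The key algebraic manipulation is to use bilinearity and $[H_2,[H_2,H_1]] = -[H_2,[H_1,H_2]]$ to rewrite
\begin{equation*}
R = \tfrac{1}{24}[H_1 - H_2,[H_1,H_2]] = \underbrace{\tfrac{1}{24}[H,[H_1,H_2]]}_{R_1} \underbrace{- \tfrac{1}{12}[H_2,[H_1,H_2]]}_{R_2},
\end{equation*}
where $H = H_1 + H_2$. By construction $R_1 = [H,M]$ with the small witness $M = \frac{1}{24}[H_1,H_2]$, so $R_1$ automatically satisfies the \crtlnameref{def:orthogonality}, while $R_2$ is a double commutator involving the (presumed small) term $H_2$. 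Substituting $p = 2$ and this decomposition into \cref{thm:approx-interference-bound} immediately yields the $R_2$-contribution $\bigO(\norm{[H_2,[H_1,H_2]]}\,t^3/r^2)$ (the second term of the target bound) and the remainder contribution $\bigO(\cml\,t^4/r^3)$ (the third term).

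To obtain the first term $\norm{[H_1,H_2]}\,t^3/r^3$, the $\norm{R_1}\,t^2/r^2$ term produced by a naive use of \cref{thm:approx-interference-bound} must be sharpened by exploiting that $R_1$ is \emph{exactly} of the form $[H,M]$ with a single-commutator witness $M = \frac{1}{24}[H_1,H_2]$. I would revisit the similarity-transform argument underlying the proof of \cref{thm:fast-convergence}: conjugating $e^{-\ii H_\eff \dt}$ by a small unitary $V = \exp(-\ii \dt^{2} M')$ built from $M$ rotates away the $R_1\dt^{2}$ contribution in the effective generator, with boundary deviation $\norm{V-I}$ controlled by $\norm{M}$ rather than $\norm{R_1}$. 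Carefully tracking the cancellation between the left and right copies of $V$, together with the per-segment residual correction of order $\dt^{3}$ with prefactor $\norm{M}$, yields the desired $\norm{M}\dt^{3} = \bigO(\norm{[H_1,H_2]}\,t^3/r^3)$ contribution. Any additional BCH terms generated by the conjugation are of the form $[M,R]$ or $[M,H_\err]$ and can be absorbed into $\cml\,t^4/r^3$.

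The main obstacle is precisely this last refinement: the approximate-interference bound as stated only gives $\norm{R_1}\,t^2/r^2 \lesssim \norm{H}\norm{[H_1,H_2]}\,t^2/r^2$, which is loose in the biased regime $\norm{H_2}\ll\norm{H_1}$ since $\norm{H}\approx \norm{H_1}$ is large and does not appear in the claimed bound. The sharpening requires tracking the \emph{witness} $M$ through the conjugation rather than the norm of $R_1$ itself, and verifying that every residual term generated by the similarity transform is either already captured by the $R_2$-contribution or is a quartic commutator controlled by $\cml$. This two-term symmetric structure is exactly why the argument works for PF2 of two terms; extending the same reasoning to more groups or higher orders would require identifying analogous commutator witnesses, which the paper flags as a direction for future work.
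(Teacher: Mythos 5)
Your first two steps coincide with the paper's own proof: the BCH expansion with remainder $\bigO(\cml\,\dt^4)$, the rewriting $R=\frac{1}{24}[H,[H_1,H_2]]-\frac{1}{12}[H_2,[H_1,H_2]]$, and the peeling off of the $R_2$ piece (which the paper performs via the variation-of-parameters step underlying \cref{thm:approx-interference-bound}), yielding the $\norm{[H_2,[H_1,H_2]]}\,t^3/r^2$ and $\cml\,t^4/r^3$ contributions. The divergence is in how the commutator piece $R_1=\frac{1}{24}[H,[H_1,H_2]]$ is handled. The paper does not run a fresh conjugation argument: it simply invokes \cref{thm:tight-general-bound} with $\epsilon=0$, observing that for a perturbation of the exact form $[H,M]$ one has $\Delta_H^0(R_1)=0$ while $\calR_H^0(R_1)$ is essentially the witness $[H_1,H_2]$, so that theorem already charges this piece by $\norm{[H_1,H_2]}$ times the small prefactor $h=(t/r)^2$ rather than by $\norm{R_1}$. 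Your central insight --- that the witness $M=\frac{1}{24}[H_1,H_2]$, not $\norm{R_1}$, must control this term --- is precisely the content of that theorem, so the similarity-transform machinery you propose to rebuild is unnecessary; citing \cref{thm:tight-general-bound} also spares you from bookkeeping the $\bigO(h^2)$ commutators (e.g.\ terms like $[M,[M,H]]$) that your conjugation generates and that are not obviously dominated by $\cml$.

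The genuine gap is the exponent you claim from the sharpening. The boundary deviation of $V=\exp(-\ii\dt^{2}M')$ is $\norm{V-I}=\bigO(\dt^{2}\norm{M})$, and a per-segment residual of size $\dt^{3}\norm{M}$ accumulates over $r$ segments to $t\,\dt^{2}\norm{M}$; neither mechanism produces a total contribution $\norm{M}\dt^{3}=(t/r)^{3}\norm{[H_1,H_2]}$, and first-order perturbation theory ($\delta P\sim hM$) shows the non-accumulating part is genuinely of order $(t/r)^{2}\norm{[H_1,H_2]}$, so no extra power of $\dt$ is available. You should not try to force it: the paper's own appendix proof concludes with $\bigO\big(\tfrac{t^3}{r^2}\norm{[H_1,H_2]}\big)$ for this piece (a direct application of the $2h\norm{\calR_H^0(R)}$ term gives $(t/r)^{2}\norm{[H_1,H_2]}$), so the $t^3/r^3$ printed in the statement is not what the written argument delivers either. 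With the first term read as $\bigO\big((t/r)^{2}\norm{[H_1,H_2]}\big)$, your outline --- completed by citing \cref{thm:tight-general-bound} for the $R_1$ piece instead of the sketched conjugation --- is sound and matches the paper's proof.
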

The bound in the theorem will improve the triangle bound when $H_1$ is much larger than $H_2$.
The proof of this theorem can be found in~\cref{apd:sec:proof-approximate-pf2-interference}.

\subsection{Exact interference bound}

In \cref{thm:fast-convergence}, we work with large $\bigO$ factors, ignoring the hidden constant in the convergence rate. We need the explicit constant factors when we want to give an explicit bound of the error. So here we will prove theorems with explicit constant factors. The following theorem is a more explicit form of \cref{thm:fast-convergence} in the case of PF1. In the case of PF1, we can show rigorously that the error interferes and grows sublinearly. 
\ifnum\aqis=1
Here we provide the statement for tight bound in the general case. The statement of tight bound for PF1 will be shown in the technical version.
\fi

\begin{theorem}[Tight upper bound for general error interference]\label{thm:tight-general-bound}
    Let $H$ and $R$ be Hermitian, and $h$ be any real parameter. Let $\{\ket{\psi_i}\}$ and $\{\ket{\psi_i'}\}$ to be the eigenvectors of $H$ and $H+hR$ respectively and $\{\lambda_i\}, \{\lambda_i'\}$ to be the corresponding eigenvalues. Then $R$ can be decomposed as $R = \sum b_{jk}\ket{\psi_j}\bra{\psi_k'}$. For any $\epsilon>0$, let 
    \begin{align*}
    \Delta_H^\epsilon(R) &= \max\Big\{\big\|\sum_{|\lambda_j - \lambda_k'| <\epsilon}b_{jk}\ket{\psi_j}\bra{{\psi}_k'}\big\|, \\&\big\|\sum_{|\lambda_j - \lambda_k'| <\epsilon}\sgn(\lambda_j-\lambda_k')b_{jk}\ket{\psi_j}\bra{{\psi}_k'}\big\|\Big\}, \\
    \calR_H^\epsilon(R) &= \sum_{|\lambda_j - \lambda_k'|\geq \epsilon}\frac{1}{\lambda_j - \lambda_k'} b_{jk} \ket{\psi_j}\bra{\psi_k'}
    \end{align*}
    Then we have
    \ifnum\aqis=0
    \begin{equation}\label{eq:tight-interference-general}
        \norm{e^{-\ii(H+hR)t} - e^{-\ii Ht}} \leq (1+t\epsilon)h\max_{ |\epsilon'|\leq \epsilon}\Delta_H^{\epsilon'}(R)t + 2h\norm{\calR_H^\epsilon(R)}.
    \end{equation}
    \else
    \begin{align}
        &\norm{\exp(-\ii(H+hR)t) - \exp(-\ii Ht)} \\ 
        &\leq h(1+t\epsilon)\max_{ |\epsilon'| \leq \epsilon}\norm{\Delta_H^{\epsilon}(R)}t + 2h\norm{\calR_H^\epsilon(R)}.
    \end{align}
    \fi
\end{theorem}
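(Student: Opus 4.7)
I would combine Duhamel's formula with a spectral-gap decomposition in the natural mixed basis. Setting $M=H+hR$, the starting point is
\[
e^{-\ii Mt}-e^{-\ii Ht} \;=\; -\ii h\int_0^t e^{-\ii Hs}\,R\,e^{-\ii M(t-s)}\,\d s.
\]
Expanding $R=\sum_{jk}b_{jk}\ket{\psi_j}\bra{\psi_k'}$, each entry of the integrand is diagonalized ($e^{-\ii Hs}$ acts on the left by $e^{-\ii\lambda_j s}$, $e^{-\ii M(t-s)}$ acts on the right by $e^{-\ii\lambda_k'(t-s)}$), so the $s$-integral closes in one line to
\[
e^{-\ii Mt}-e^{-\ii Ht} \;=\; h\sum_{jk}\frac{b_{jk}\bigl(e^{-\ii\lambda_j t}-e^{-\ii\lambda_k' t}\bigr)}{\lambda_j-\lambda_k'}\,\ket{\psi_j}\bra{\psi_k'},
\]
where the degenerate pairs are harmless because the eigenvalue identity $(\lambda_j-\lambda_k')\braket{\psi_j|\psi_k'}=-hb_{jk}$ forces $b_{jk}=0$ whenever $\lambda_j=\lambda_k'$.

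\textbf{Far-gap part (the interference mechanism).} Split the sum at $\epsilon$. For pairs with $|\lambda_j-\lambda_k'|\ge\epsilon$ I would use the telescoping identity
\[
\sum_{|\lambda_j-\lambda_k'|\ge\epsilon}\frac{b_{jk}(e^{-\ii\lambda_j t}-e^{-\ii\lambda_k' t})}{\lambda_j-\lambda_k'}\ket{\psi_j}\bra{\psi_k'} \;=\; e^{-\ii Ht}\calR_H^\epsilon(R)-\calR_H^\epsilon(R)\,e^{-\ii Mt},
\]
which is the operator-valued analogue of integration by parts against the resolvent. The triangle inequality then yields the $t$-independent bound $2h\norm{\calR_H^\epsilon(R)}$, matching the second term of the theorem. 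Morally, this is where destructive error interference is captured: the apparent linear-in-$t$ growth from each off-resonant mode collapses to a bounded boundary contribution.

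\textbf{Near-gap part and the refined prefactor.} For $|\lambda_j-\lambda_k'|<\epsilon$ I would use the identity
\[
\frac{e^{-\ii\lambda_j t}-e^{-\ii\lambda_k' t}}{\lambda_j-\lambda_k'} \;=\; -\ii t\,e^{-\ii(\lambda_j+\lambda_k')t/2}\,\operatorname{sinc}\bigl((\lambda_j-\lambda_k')t/2\bigr),
\]
and split the sinc factor into its even-in-gap leading value (equal to $1$) and an odd-in-gap correction of size $O(t|\lambda_j-\lambda_k'|)\le O(t\epsilon)$. The even piece realizes a unitary conjugate of the unsigned near-diagonal operator $\sum_{|\lambda_j-\lambda_k'|<\epsilon}b_{jk}\ket{\psi_j}\bra{\psi_k'}$, contributing at most $ht$ times its norm. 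After pulling out $\sgn(\lambda_j-\lambda_k')$, the odd piece realizes a unitary conjugate of the signed near-diagonal operator, contributing at most $ht\cdot t\epsilon$ times its norm. Bounding by the worse of the two gives $(1+t\epsilon)h\,\Delta_H^\epsilon(R)\,t$; noting that the same decomposition remains valid at every threshold $\epsilon'\le\epsilon$ and taking the sharpest such choice produces the maximum in the theorem statement.

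\textbf{Main obstacle.} Steps 1 and 2 are essentially routine once the mixed basis is adopted, and the far-gap telescoping is the conceptually important but technically easy ingredient. The delicate part is the near-gap analysis: arriving at exactly the prefactor $(1+t\epsilon)$ with the max of unsigned and signed norms requires cleanly separating the in-phase and out-of-phase pieces of the sinc modulation and verifying that each piece can be expressed as a unitary conjugate of a near-diagonal operator, so that the triangle inequality does not lose constants on the mixed-basis operator. Controlling near-degenerate pairs is also subtle because the overlaps $\braket{\psi_j|\psi_k'}=-hb_{jk}/(\lambda_j-\lambda_k')$ blow up exactly where the denominator vanishes, which forces the use of the integrated identities above rather than any pointwise bound on individual matrix elements.
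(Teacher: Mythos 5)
Your skeleton coincides with the paper's own proof: Duhamel/variation-of-parameters, expansion of $R$ in the mixed eigenbases, explicit evaluation of the time integral to get $h\sum_{jk}b_{jk}\frac{e^{-\ii\lambda_j t}-e^{-\ii\lambda_k' t}}{\lambda_j-\lambda_k'}\ket{\psi_j}\bra{\psi_k'}$, a split at gap $\epsilon$, and the telescoping identity $e^{-\ii Ht}\calR_H^\epsilon(R)-\calR_H^\epsilon(R)e^{-\ii(H+hR)t}$ for the far-gap part, giving the $t$-independent $2h\norm{\calR_H^\epsilon(R)}$ term. All of this is correct and is essentially what the paper does (the paper writes the same closed form before pulling out $e^{-\ii Ht}$), and your observation $(\lambda_j-\lambda_k')\braket{\psi_j|\psi_k'}=-hb_{jk}$ for dismissing degenerate pairs is valid.

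The genuine gap is in the near-gap step, exactly the one you flag as delicate. Extracting the mean phase works for the leading piece because $e^{-\ii(\lambda_j+\lambda_k')t/2}$ factors into a phase depending only on $j$ times a phase depending only on $k$, so that piece is literally $-\ii t\,e^{-\ii Ht/2}\bigl(\sum_{|\lambda_j-\lambda_k'|<\epsilon}b_{jk}\ket{\psi_j}\bra{\psi_k'}\bigr)e^{-\ii(H+hR)t/2}$ and is bounded by $h t\,\Delta_H^\epsilon(R)$. But the correction multiplier $\operatorname{sinc}\bigl((\lambda_j-\lambda_k')t/2\bigr)-1$ (which, incidentally, is even in the gap, not odd) depends on $j$ and $k$ jointly through the gap and does not factor into a $j$-phase times a $k$-phase; hence it cannot be absorbed by sandwiching with unitaries, and the claim that the correction ``realizes a unitary conjugate of the signed near-diagonal operator'' fails. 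A pointwise bound $O(t\epsilon)$ on the scalar multiplier does not by itself control the operator norm of the weighted sum. The missing ingredient — which the paper's proof invokes, if tersely — is to expand the gap-dependent multiplier as a linear combination (summation by parts / layer-cake over thresholds $\epsilon'\le\epsilon$) of indicator truncations, so that the correction becomes a combination of operators of the form entering $\Delta_H^{\epsilon'}(R)$, with total coefficient mass controlled by the sup plus the total variation $O(t\epsilon)$ of the multiplier; the signed variant in the definition of $\Delta_H^\epsilon$ is there precisely to handle the odd-in-gap part (the $(1-\cos)/(\lambda_j-\lambda_k')$ real part that appears if one does not pre-extract the symmetric phase), and this threshold decomposition is why the statement carries $\max_{|\epsilon'|\le\epsilon}$. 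With that step supplied your route closes and reproduces the $(1+t\epsilon)h\,t\max_{|\epsilon'|\le\epsilon}\Delta_H^{\epsilon'}(R)$ term; without it, the near-gap bound is asserted rather than proven.
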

The proof of the theorem can be found in~\cref{apd:sec:proof-exact-interference}.
    The definitions of $\Delta_H^\epsilon(R)$ and $\calR_H^\epsilon(R)$ might seem not intuitive at first glance, but it is just an alternative expression for the diagonal element of $R$ in eigenbasis of $H$. 
    $\Delta_H^\epsilon(R)$ represents the matrix formed by the near-diagonal terms of $R$ in the eigenbasis of $H$, while $\calR_H(R)^\epsilon$ is the matrix satisfying the constraint $[H, \calR_H^\epsilon(R)] = R -\Delta_H^\epsilon(R)$ when $\epsilon=0$, representing the bound on the accumulated error of the off-diagonal terms. As a special case, we show the exact bound for PF1 in the following theorem.
\ifnum\aqis=0
\begin{theorem}[Tight upper bound for PF1]\label{thm:exact-pf1-interference}
     Assume $H_l$ to be Hermitians. 
    Let $\pf_1(\dt) = \prod_{l}^{\rightarrow} e^{-\ii \dt H_l}$ be the first-order product formula for $H=\sum H_l$. Set $\dt = t/r$ and
    $\pf_1(\dt)$ is repeated $r$ times to simulate $e^{-\ii H t}$. 
    Define $R\coloneqq  \sum_{j<k}[H_j, H_k]$. 
    Define $\ket{\psi_i}$ and $\ket{\psi_i'}$ to be the eigenvectors of $H$ and $H+ \frac{\dt}{2\ii} R$, and the corresponding eigenvectors are $\lambda_i$ and $\lambda_i'$. 
    Assume that $R$ can be decomposed in the basis of $\ket{\psi_j}$ and $\ket{\psi_k'}$ as $R = \sum b_{jk} \ket{\psi_j}\bra{\psi_k'}$. 
    For any $\epsilon$, define $\Delta_H^\epsilon(R)$ and $\calR_H^\epsilon(R)$ similarly to the definition in~\cref{thm:tight-general-bound}. 
    Then we have that
    \begin{equation}\label{eq:tight-interference-pf1}
    \begin{aligned}
        \norm{\pf_1(\dt)^r - e^{-\ii Ht}} &\leq \cml_1(H)\frac{t^3}{r^2} + (1+t\epsilon)\max_{|\epsilon'|\leq \epsilon}\bigg\|{\Delta_H^{\epsilon'}(R)}\bigg| \frac{t^2}{r}\\& + 2\norm{\calR_H^\epsilon(R)}\frac{t}{r} + \cml_2(H) \frac{t^4}{r^3},
    \end{aligned}
    \end{equation}
    where
    \begin{align*}
        \cml_1(H) &= \frac{1}{2}\sum_{i}\norm{\sum_{i<l<k}[H_i,[H_l,H_k]]} \\&+ \frac{1}6\sum_{i}\norm{\sum_{l>i}[H_i, [H_i, H_l]]},\\
        \cml_2(H) &= \frac{1}{12}\sum_i\norm{\sum_{i<l<k}[H_i,[H_i,[H_l,H_k]]]}.
    \end{align*}
\end{theorem}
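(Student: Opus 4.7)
My plan is to split the long-time Trotter error into a ``leading-order'' part, which is exactly the setup of \cref{thm:tight-general-bound}, and a ``higher-order BCH remainder'' controlled by single-step nested-commutator bounds; these two pieces are then recombined by the triangle inequality.

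Step 1. Introduce the surrogate unitary
\[
V(\dt):=e^{-\ii\dt H-\tfrac{\dt^{2}}{2}R}=e^{-\ii\dt(H+\tfrac{\dt}{2\ii}R)},
\]
which is the BCH-truncation of $\pf_{1}(\dt)$ keeping only the linear and quadratic terms of its generator. Since that generator is independent of $r$, one has $V(\dt)^{r}=e^{-\ii t(H+\frac{\dt}{2\ii}R)}$, so the triangle inequality gives
\[
\|\pf_{1}(\dt)^{r}-e^{-\ii Ht}\|\le\|\pf_{1}(\dt)^{r}-V(\dt)^{r}\|+\|V(\dt)^{r}-e^{-\ii Ht}\|.
\]

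Step 2. For the second piece, $R=\sum_{j<k}[H_{j},H_{k}]$ is anti-Hermitian, so $\tilde{R}:=-\ii R$ is Hermitian and $V(\dt)^{r}=e^{-\ii(H+h\tilde{R})t}$ with real parameter $h=\dt/2=t/(2r)$. The quantities $\Delta_{H}^{\epsilon}(\cdot)$ and $\calR_{H}^{\epsilon}(\cdot)$ are spectral-norm-based and therefore invariant under multiplying their argument by $\pm\ii$. Invoking \cref{thm:tight-general-bound} with the pair $(\tilde{R},h)$ then produces
\[
\|V(\dt)^{r}-e^{-\ii Ht}\|\le(1+t\epsilon)\max_{|\epsilon'|\le\epsilon}\|\Delta_{H}^{\epsilon'}(R)\|\tfrac{t^{2}}{r}+2\|\calR_{H}^{\epsilon}(R)\|\tfrac{t}{r},
\]
which is exactly the middle two terms of \eqref{eq:tight-interference-pf1} (in fact with a factor of $1/2$ to spare in each, which is absorbed into the stated bound).

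Step 3. For the first piece, the unitary telescoping inequality $\|A^{r}-B^{r}\|\le r\|A-B\|$ reduces the task to proving the single-step bound
\[
\|\pf_{1}(\dt)-V(\dt)\|\le\cml_{1}(H)\dt^{3}+\cml_{2}(H)\dt^{4},
\]
after which multiplication by $r$ yields the outer $\cml_{1}t^{3}/r^{2}$ and $\cml_{2}t^{4}/r^{3}$ contributions. The two sides agree through $O(\dt^{2})$ by the very definition of $R$, so I would Taylor-expand both in $\dt$ and reorganise the $O(\dt^{3})$ and $O(\dt^{4})$ discrepancies into nested commutators: the $[H_{i},[H_{l},H_{k}]]$ with $i<l<k$ emerge from the $\tfrac{1}{2!}$ Taylor coefficient, the $[H_{i},[H_{i},H_{l}]]$ with $l>i$ from the $\tfrac{1}{3!}$ coefficient, and $[H_{i},[H_{i},[H_{l},H_{k}]]]$ from the $\tfrac{1}{12}$ coefficient at the next order. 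Taking spectral norms inside the outer sum over $i$ yields exactly $\cml_{1}(H)$ and $\cml_{2}(H)$ as defined.

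Main obstacle. The delicate point in Step 3 is to avoid a loose application of the triangle inequality over the inner sums $i<l<k$: keeping those inner sums inside the spectral norm is essential to prevent $\cml_{1}$ and $\cml_{2}$ from picking up spurious factors that grow polynomially in $L$. The clean way to do this is an iterated Duhamel / telescoping decomposition of $\pf_{1}(\dt)-V(\dt)$ that isolates the error contribution of each left-factor $e^{-\ii\dt H_{i}}$ before Taylor expansion, a technique familiar from the Childs--Su theory of Trotter error \cite{childsTheoryTrotterError2021}. Once the three ingredients are in place, combining them via the triangle inequality assembles \eqref{eq:tight-interference-pf1}.
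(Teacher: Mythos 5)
Your proposal follows essentially the same route as the paper's proof: the same surrogate $V(\dt)=\exp\qty(-\ii H\dt-\tfrac{\dt^{2}}{2}R)$ whose $r$-th power is the coherent evolution $e^{-\ii t(H+\frac{\dt}{2\ii}R)}$, the same triangle-inequality split with $\norm{A^{r}-B^{r}}\le r\norm{A-B}$, the coherent part bounded by \cref{thm:tight-general-bound} with $h=\dt/2$ (your factor-of-two bookkeeping is correct), and the single-step bound $\cml_1(H)\dt^{3}+\cml_2(H)\dt^{4}$ obtained by a Duhamel-type analysis keeping the inner sums inside the spectral norm~---~which is exactly what the paper does via the variation-of-parameters formula for two terms followed by induction on the number of summands, i.e., the "iterated Duhamel/telescoping" you describe. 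The only caveat is that your Step 3 states the target constants rather than deriving them, but the technique you name is precisely the one the paper executes, so the plan is sound.
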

Proof of the theorem is shown in~\cref{apd:sec:proof-exact-interference}.

In \cref{eq:tight-interference-general} and \cref{eq:tight-interference-pf1}, the leading term of the error is controlled by the norm of $\Delta_H^\epsilon(R)$ and $\calR_H(R)$. $\Delta_H^\epsilon(R)$ can be viewed as the (almost) diagonal terms of $R$ in the eigenbasis of $H$, while $\calR_H^\epsilon(R)$ represents the off-diagonal terms of $R$ in the eigenbasis of $H$. When $\Delta_H^0(R) = 0$, which is exactly the condition of \cref{thm:fast-convergence}, the error would behave as PF2 as long as $r$ tends to infinity. But if $\calR_H^\epsilon(R)$ is too large, then the error grows linearly until $r$ is large enough. In the real world, we observe that $\calR(H)$ is generally of reasonable scale so we can say that the existence of interference is equivalent to $\Delta_H^0(R)=0$.

\emph{Remark.} As a concrete example, PF2 that each term has equal norm has a large $\Delta_H^0(R)$, meaning the error does not ``interfere'' and would accumulate linearly according to the triangle inequality.
\fi

\subsection{Examples of none-interferencing PF2 formulas}\label{sec:non-interference-pf2}
This section provides a concrete example of two-term PF2 that does not interfere.
Consider the bi-grouping of 1D nearest-neighbor Heisenberg model, with the partition according to the parity (even-odd) of the site index of each Hamiltonian term 
$H_{\nn} = H_{\even} +  H_{\odd}$, i.e.,
\begin{align}
    H_{\even} &= \sum_{j=1}^{\lfloor{n/2\rfloor}} \qty(X_{2j-1}X_{2j} + Y_{2j-1}Y_{2j} + Z_{2j-1}Z_{2j} +hZ_{2j-1}) \\
    H_{\odd}  &= \sum_{j=1}^{\lceil{n/2\rceil}-1} \qty(X_{2j}X_{2j+1} + Y_{2j}Y_{2j+1} + Z_{2j}Z_{2j+1} +hZ_{2j}),
    \label{eq:heisenberg_parity}
\end{align}
where all the summands in $H_{\even}$ (and $H_{\odd}$) commute with each other. Notice that for the leading term of the error in PF2, i.e. $R=\frac{1}{24}[H_{\even}-H_{\odd},[H_{\even}, H_{\odd}]]$, it does not satisfy the orthogonal condition with $H$. 
We can figure this out by observing that $\Tr (HR) = \Theta(n)$. 
So PF2 of the bi-grouping Heisenberg model does not interfere. 
In \cref{fig:tri_group} (a), the numerical simulation shows that the error of PF2 follows exactly the bound of the triangle inequality (cf. \cite[Eq. (152)]{childsTheoryTrotterError2021}).

\emph{Acknowledgment.}--
The authors thank Zhihan Li for the insightful discussions that sparked the initiation of this project. XY is supported by the National Natural Science Foundation of China Grant (Grant No.~12361161602), NSAF (Grant No.~U2330201), and the Innovation Program for Quantum Science and Technology (Grant No.~2023ZD0300200). 
J.X. is supported by the HKU Postgraduate Scholarship. The code used in this study is available on GitHub \url{https://github.com/Jue-Xu/Interference-Trotter-Error}.

\fi

\bibliographystyle{apsrev4-2}
\bibliography{reference, ref_aps}

\ifnum\noappendix=0
\appendix
\onecolumngrid
\newpage

\begin{center}
{\bf \large Supplementary Material: ``Error interference in quantum simulation"} 
\end{center}

\section{Proof of necessary and sufficient condition for error interference}\label{apd:sec:appendix-interference-bound}

\ifnum\supplematerial=1
We list the definition of interference and the orthogonal condition here as we need these definitions in the supplementary material. These definitions are the same as in the main text.
\begin{definition}[Interference, informal]\label{apd:def:interference}
For a $p$th-order product formula $\pf_p(t/r)$ approximating $e^{-\ii Ht/r }$, if the total error grows sublinearly to the sum of error of each timestep, i.e.,
\begin{equation}
    \norm{\pf_p^r(t/r) - e^{-\ii Ht}} = o\qty( r\norm{\pf_p(t/r) - e^{-\ii  H(t/r)}} ),
\end{equation}
where $\norm{\cdot}$ denotes the spectral norm and 
when $r$ is large, then we say that the Trotter error \emph{interferes}.
\end{definition}
\begin{definition}[Orthogonality condition]\label{apd:def:orthogonality}
Consider a given Hamiltonian evolution $e^{-\ii H\dt}$ and its approximation $e^{-\ii H_{\eff}\dt}$.
If the leading-order term $R$ of $H_{\err}=H_{\eff}-H$ satisfies
\begin{align}\label{apd:eq:orthogonality}
    \braket{\psi_i|R|\psi_i}=0
\end{align}
for all eigenstates $\ket{\psi_i}$ of $H$. Then
we say this approximation $e^{-\ii\delta t H_{\eff}}$ satisfies an orthogonality condition.
\end{definition}
\fi
In the following theorem, we prove the necessary and sufficient condition for error interference.
\begin{theorem}[Neccesary and sufficient condition for error interference]\label{apd:thm:appendix-interference-bound}
    The orthogonality condition is a necessary and sufficient condition for Trotter error interference. 
    A PF$p$ evolution can be expressed as $\pf_p(\dt) = \exp(-\ii H\dt -\ii R \dt^{p+1} -\ii R_{\re}(\dt)\dt^{p+2})$, 
    where $R$ is independent of $\dt$ and $R_{\re}(\dt)$ has bounded norm when $\dt$ varies. 
    If $R$ satisfies the \crtlnameref{def:orthogonality}, then the total error can be bounded by
    \begin{align}\label{apd:eq:trotter-interference-approx}
         \norm{\pf_p(t/r)^r - e^{-\ii Ht}} = \bigO \qty(\norm{R}\frac {t^p}{r^p} + \norm{R_{\re}} \frac{t^{p+2}}{r^{p+1}}).
    \end{align} 
    If $R$ does not satisfy the orthogonality condition, then the error term is
    \begin{align}
        \norm{\pf_p(t/r)^r - e^{-\ii Ht}} =   
        \Omega\left(\max_i \big|\braket{\psi_i|R|\psi_i}\big|\frac{t^{p+1}}{r^p}\right).
    \end{align}
\end{theorem}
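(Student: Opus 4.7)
My plan is to exploit the fact that within a single segment the generator is time-independent, so $\pf_p(\dt)^r = e^{-\ii H_{\eff}t}$ with $H_{\eff} = H + R\dt^p + R_{\re}(\dt)\dt^{p+1}$ and $\dt = t/r$. This reduces the accumulated-error question to a single spectral-perturbation problem: bound $\|e^{-\ii H_{\eff}t} - e^{-\ii Ht}\|$ where $H_{\eff}-H$ is a perturbation of size $O(\dt^p)$. Writing $H = P\Lambda P^\dag$ and $H_{\eff} = (P+\delta P)(\Lambda+\delta\Lambda)(P+\delta P)^\dag$, the triangle inequality yields
\begin{equation}
    \|e^{-\ii H_{\eff}t} - e^{-\ii Ht}\| \le 2\|\delta P\| + t\|\delta\Lambda\|,
\end{equation}
which cleanly separates a piece that does not accumulate in $r$ from a piece that does, exactly as in the geometric picture of \cref{fig:schematic}.

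For the sufficient direction, first-order perturbation theory says that $\delta P$ is controlled by the off-diagonal part of the perturbation in the eigenbasis of $H$ while $\delta\Lambda$ is controlled by its diagonal part. Under $\braket{\psi_i|R|\psi_i}=0$, the entire first-order eigenvalue contribution from $R\dt^p$ vanishes, so to leading order $\|\delta\Lambda\| = O(\|R_{\re}\|\dt^{p+1})$ while $\|\delta P\| = O(\|R\|\dt^p)$. Substituting $\dt = t/r$ yields the claimed bound $O(\|R\|t^p/r^p + \|R_{\re}\|t^{p+2}/r^{p+1})$. More rigorously, I would derive this by invoking \cref{thm:tight-general-bound} with $h = \dt^p$: orthogonality forces $\Delta_H^0(R) = 0$, so only the non-accumulating off-diagonal resolvent $\calR_H^0(R)$ survives, and the $R_{\re}$ correction is then absorbed by an elementary triangle inequality $\|e^{-\ii(A+B)t} - e^{-\ii At}\| \le t\|B\|$ applied to $A = H+R\dt^p$ and $B = R_{\re}\dt^{p+1}$.

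For the necessary direction, if some diagonal element $c := \braket{\psi_{i^*}|R|\psi_{i^*}}$ of maximal magnitude is nonzero, first-order perturbation produces an eigenvalue shift $\delta\lambda_{i^*} = c\,\dt^p + O(\dt^{p+1})$ that survives accumulation. I would extract a spectral-norm lower bound by testing against a two-level superposition $\ket{\phi} = (\ket{\psi_{i^*}} + \ket{\psi_j})/\sqrt{2}$ with $j$ chosen so that $\delta\lambda_{i^*} - \delta\lambda_j = \Theta(c)\dt^p$; such a $j$ exists because $c$ is maximal and the trace of $R$ is bounded. The resulting relative phase $\Theta(c)\dt^p\,t$ gives $\|(e^{-\ii H_{\eff}t} - e^{-\ii Ht})\ket{\phi}\| = \Omega(|c|\,t^{p+1}/r^p)$, which lower-bounds the operator norm and matches the claim.

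The principal obstacle is handling spectral degeneracies of $H$: the naive first-order eigenvector formula diverges when $\lambda_i = \lambda_j$, and the very notion of ``diagonal elements of $R$'' is basis-dependent inside degenerate subspaces. I would circumvent this by working throughout with the $\epsilon$-regularized quantities $\Delta_H^\epsilon(R)$ and $\calR_H^\epsilon(R)$ from \cref{thm:tight-general-bound}, which lump near-degenerate eigenvalue pairs into the accumulating piece and relegate genuinely off-diagonal pairs to the non-accumulating piece. A secondary subtlety is ensuring that $R_{\re}(\dt)$ stays uniformly bounded as $\dt \to 0$, but this follows from the BCH structure of the product formula, which was the reason for the factorization of $H_{\eff}$ in the first place.
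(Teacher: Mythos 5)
Your proposal is correct and follows essentially the same route as the paper's proof: reduce the $r$-fold product to a single effective evolution $e^{-\ii H_{\eff}t}$, split the error into a non-accumulating eigenvector rotation $\delta P$ and an accumulating eigenvalue shift $t\,\delta\Lambda$, kill the first-order shift via $\braket{\psi_i|R|\psi_i}=0$ (deferring to \cref{thm:tight-general-bound} with its $\epsilon$-regularized $\Delta_H^\epsilon(R)$, $\calR_H^\epsilon(R)$ for degeneracies), and obtain the lower bound from the surviving diagonal phase. The only differences are cosmetic --- you peel off $R_{\re}$ after forming $H_{\eff}$ rather than before, and you use a two-level test state for the lower bound where the paper reads it directly off the diagonal of $e^{-\ii t(\Lambda+\delta\Lambda)}$ --- and both arguments sit at the same level of rigor.
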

\begin{proof}[Proof]
Let $h=(t/r)^p$ and $H_{\eff} = H+hR$. 
Then clearly $\norm{\pf_p(t/r) - e^{-\ii H_{\eff}t/r}} \leq \norm{R_{\re}}\frac{t^{p+2}}{r^{p+2}}$. 
By triangle inequality we have 
\begin{align}
    \norm{\pf_p(t/r)^r - e^{-\ii H_{\eff}t}} \leq \norm{R_{\re}}\frac{t^{p+2}}{r^{p+1}}.
\end{align}

Suppose the $n$-qubit Hermitian $H$ has a spectral decomposition $H = P \Lambda P^*$ where $P \in SU(d)$ with $d=2^n$ and $\Lambda$ is diagonal. 
Diagonal elements of $\Lambda$ are exactly the eigenvalues of $H$ (denote them as $\lambda_i$), and their corresponding eigenstates are exactly the columns of $P$ (denote them as $\ket{\psi_i}$). 

We will estimate $\delta \lambda_i$ in the following proof, thus deriving a bound on the total error. When $h$ is small enough, we can treat it as a differential, namely $H_{\eff} = H+ \dd h R$. Then we can also calculate the differential of the eigenvalues and eigenstates when $h$ varies (namely $\dd \lambda_i$ and $\dd \ket{\psi_i}$). Notice that $\dd \lambda_i$ is a first-order approximation of $\delta \lambda_i$, so by estimating $\dd \lambda_i$ we can bound $\delta \lambda_i$. 

We claim that
\begin{align}
    \dd \lambda_i = \braket{\psi_i | \dd H | \psi_i} = \Tr (\ket{\psi_i}\bra{\psi_i} \dd H).
\end{align}

Consider the differential of the eigenvectors $\ket{\psi_i}$ 
\begin{align}
    H\ket{\psi_i} = \lambda \ket{\psi_i}, (H+\dd H) (\ket{\psi_i} + \dd \ket{\psi_i}) = (\lambda_i + \dd \lambda_i)(\ket{\psi_i} + \dd \ket{\psi_i}),
\end{align}
extract all the first-order terms, we have
\begin{align}
    \dd H \ket{\psi_i} + H \dd \ket{\psi_i} = \dd \lambda_i \ket{\psi_i} + \lambda_i \dd \ket{\psi_i}.
\end{align}
Therefore we can deduce that
\begin{align}
    \braket{\psi_i | \dd H | \psi_i }+ \braket{\psi_i| H\,( \dd |\psi_i}) = \braket{\psi_i | \dd \lambda_i | \psi_i} + \braket{\psi_i | \lambda_i \, (\dd | \psi_i}).
\end{align}
Since $\ket{\psi_i}$ is an eigenstate of $H$, we have
\begin{align}
    \braket{\psi_i | H  \textup{d} (| \psi_i}) = \braket{\psi_i | \lambda_i \textup{d} (|\psi_i}).
\end{align}
Consequently, we have
\begin{align}
    \braket{\psi_i | \dd H | \psi_i} = \braket{\psi_i | \dd \lambda_i | \psi_i} = \dd\lambda_i.
\end{align}
Therefore, we have $\dd \lambda_i = 0$ if and only if $\braket{\psi_i| \dd H |\psi_i} =0$. Let's turn to the analysis of variance of $H$ instead of differentials. Namely, we do not assume $h$ tends to 0 for now.
As for \cref{eq:trotter-interference-approx}, we know that if $\delta H =\delta h R$ measures 0 on all eigenstates $\ket{\psi_i}$, then $\delta \lambda_i = \bigO(h^2\norm{R}^2)$. 
Thus, $\delta \Lambda = \bigO(\norm{R}^2h^2)$
$\delta P$ represents the rotation of the eigenvectors such that the difference $\delta P$ can be bounded by the $O(\norm{\delta H}) = O(th)$ when $R$ is fixed. 
So we have
\begin{equation}
    \begin{aligned}
        \exp(-\ii t H_{\eff}) &\approx \exp(-\ii t(H+hR)) \\&= (P+\delta P) \exp (- \ii t(\Lambda + \delta \Lambda) ) (P+ \delta P)^\dagger \\ 
        &= (P + \bigO(\norm{R}h))( \exp (-\ii t\Lambda) + \bigO(\norm{R}^2h^2t) ) (P + \bigO(\norm{R}h))^\dagger \\
        &= P \exp (-\ii t\Lambda) P^\dagger + \bigO(\norm{R}h + \norm{R}^2h^2 t) \\
        &= \exp (-\ii t H ) + \bigO(\norm{R}h + \norm{R}^2h^2t).
    \end{aligned}
\end{equation}

Thus it is immediately that 
\begin{align}
\norm{\pf_p(t/r)^r -
\exp(-\ii Ht)} = \bigO \left( 
\norm{R}\frac{t^{p}}{p^{r}} + \norm{R_{\re}} \frac{t^{p+2}}{r^{p+1}} \right).
\end{align}

On the other hand, the error is $\Omega(thL(R))$ in the case that $L(R) = \max \big| \braket{\psi_i|R|\psi_i} \big| \neq 0$. In fact, we have
\begin{equation}
    \begin{aligned}
        \exp(-\ii tH_{\eff}) &\approx \exp(-\ii t(H+hR)) 
        \\&= (P+\delta P)\exp(-\ii t(\Lambda+\delta\Lambda))(P+\delta P)^\dagger \\
        &= (P + \bigO(\norm{R}h))(\exp(-\ii t\Lambda) + \Omega(L(R)Th))(P+\bigO(\norm{R}h))^\dagger \\
        &= \exp(-\ii tH) + \Omega(L(R)th).
    \end{aligned}
\end{equation}
Thus we have
\begin{align}
    \norm{\pf_p(t/r)^r - \exp(-\ii H t)} = \Omega\left(L(R)\frac{t^{p+1}}{r^p}\right).
\end{align}

The orthogonality condition can be equivalently expressed as there exists some $M$ such that $[H, M] = R$. In fact, if we write $M = \sum b_{ij}\ket{\psi_i}\bra{\psi_j}$, then $[H, M] = \sum b_{ij}(\lambda_i-\lambda_j)\ket{\psi_i}\bra{\psi_j}$, satisfying the orthogonal condition. And if $R = \sum b_{ij} \ket{\psi_i}\bra{\psi_j}$ satisfying $b_{ij}=0$ for all $\lambda_i = \lambda_j$, then we can set $M = \sum \frac{b_{ij}}{\lambda_i - \lambda_j}\ket{\psi_i}\bra{\psi_j}$, and we can check that $[H,M]=R$. Both $H$ and $R$ can be viewed as elements in the Lie group $\mathfrak{su}(d)$, and $R=[H,M]$ means that $R$ is ``tangent'' to $H$.

Also we will show that $L(R) = 0$ implies $\tr (H^nR)=0$, so $\tr (H^nR)=0$ is a necessary condition of fast convergence. In fact, if $R = [H, M]$, then clearly $\Tr (H^k R) = \Tr (H^k[H,M]) = \Tr (H^{k+1}M) - \Tr (H^{k+1}M) = 0$. 
\end{proof}

\begin{corollary}[Error lower bound]\label{apd:cor:lower-bound}
    If the leading-order error term of $\pf_p(t)$, $R$, satisfies  $\Tr(RH^k) \neq 0$ for some $k \in \mathbb{N}$, then $\norm{\pf_p^r(t/r) - e^{-\ii Ht}} = \Omega(\frac{t^{p+1}}{r^p})$ with a sufficient large  $r$.
\end{corollary}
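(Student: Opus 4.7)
The plan is to derive the corollary as a direct consequence of Theorem~\ref{apd:thm:appendix-interference-bound} via its contrapositive. Concretely, I will show that if the orthogonality condition holds, then $\Tr(R H^k)=0$ for every $k\in\mathbb N$; the contrapositive then forces a failure of orthogonality whenever the hypothesis of the corollary is satisfied, at which point the lower bound in Theorem~\ref{apd:thm:appendix-interference-bound} applies directly.

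First, I would recall from the discussion at the end of the proof of Theorem~\ref{apd:thm:appendix-interference-bound} that the orthogonality condition $\braket{\psi_i|R|\psi_i}=0$ for all eigenstates $\ket{\psi_i}$ of $H$ is equivalent to the existence of an operator $M$ with $[H,M]=R$. Granting this, the key computation is a one-line trace manipulation: for any $k\ge 0$,
\begin{equation}
\Tr(H^k R)=\Tr\bigl(H^k[H,M]\bigr)=\Tr(H^{k+1}M)-\Tr(H^k M H)=0,
\end{equation}
where the final equality uses the cyclic property of the trace. (For $k=0$ one obtains the case of possibly degenerate eigenspaces by noting that orthogonality forces $P_i R P_i = 0$ on each spectral projector, hence $\Tr(R P_i)=0$, and expanding $H^k=\sum_i\lambda_i^k P_i$ handles general $k$ without appealing to the $M$ representation.) Therefore, the vanishing trace condition $\Tr(R H^k)=0$ for all $k$ is a necessary condition for orthogonality.

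Taking the contrapositive, the hypothesis $\Tr(R H^k)\neq 0$ for some $k$ implies that $R$ fails the orthogonality condition, i.e.\ there is at least one eigenstate $\ket{\psi_{i_0}}$ of $H$ with $\braket{\psi_{i_0}|R|\psi_{i_0}}\neq 0$. Applying the second bound of Theorem~\ref{apd:thm:appendix-interference-bound},
\begin{equation}
\norm{\pf_p(t/r)^r-e^{-\ii H t}}=\Omega\!\left(\max_i\bigl|\braket{\psi_i|R|\psi_i}\bigr|\,\frac{t^{p+1}}{r^p}\right),
\end{equation}
and since $R$ and $H$ are fixed, the constant $c:=\max_i|\braket{\psi_i|R|\psi_i}|$ is a strictly positive number independent of $r$ and $t$. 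This yields the claimed scaling $\Omega(t^{p+1}/r^p)$.

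The only genuinely subtle point, and the one I would take most care over, is the ``sufficiently large $r$'' qualifier: the lower bound in Theorem~\ref{apd:thm:appendix-interference-bound} is derived from a perturbative expansion that is valid once $\dt=t/r$ is small enough that the remainder term $R_{\re}(\dt)\dt^{p+2}$ is dominated by the leading $R\dt^{p+1}$ contribution. I would make this threshold explicit by requiring $r$ large enough that $\|R_{\re}\|\,t/r \ll c$, ensuring the $\Omega$ constant survives intact. Beyond this bookkeeping, no additional obstacle is expected, since all of the analytic work is already contained in Theorem~\ref{apd:thm:appendix-interference-bound}.
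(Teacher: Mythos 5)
Your proposal is correct and follows essentially the same route as the paper: the paper's proof simply invokes the theorem, whose own proof already establishes the equivalence of orthogonality with $R=[H,M]$ and the resulting trace identity $\Tr(H^kR)=\Tr(H^{k+1}M)-\Tr(H^kMH)=0$, so the contrapositive plus the theorem's $\Omega\bigl(\max_i|\braket{\psi_i|R|\psi_i}|\,t^{p+1}/r^p\bigr)$ bound is exactly the intended argument. Your added care about degenerate eigenspaces and the ``sufficiently large $r$'' threshold is sound and slightly more explicit than the paper's one-line proof.
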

\begin{proof}
    The proof of the corollary is straightforward from \cref{apd:thm:appendix-interference-bound}.
\end{proof}

\section{Proof of interference of a certain state}\label{apd:sec:pf-interference-state}
The interference of error for the evolution of a fixed state is easier to analyze. In this section, we prove two corollaries on error interference for a fixed state.
\begin{corollary}\label{apd:cor-lower-bound-operator}
  Consider a $p$th-order product formula $\pf_p(\dt) = \exp(-\ii H\dt -\ii R \dt^{p+1} -\ii R_{\re}(\dt)\dt^{p+2})$. If an eigenstate of $H$, $\ket{\psi}$, satisfies $\braket{\psi|R|\psi} = 0$, then the total error of the evolution of $\ket{\psi}$ with $\pf_p(t/r)^r$ is
    \begin{equation}
        \norm{(\pf_p(t/r)^r-e^{-\ii Ht})\ket{\psi}} = \bigO \left( \norm{R}\frac{t^p}{r^p} + \norm{R_{\re}}\frac{t^{p+2}}{r^{p+1}} \right).
    \end{equation}   
\end{corollary}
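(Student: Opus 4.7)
The plan is to adapt the spectral perturbation argument of \cref{apd:thm:appendix-interference-bound} to the setting of a single eigenstate $\ket{\psi}$. First I would absorb the higher-order product-formula remainder: setting $h = (t/r)^p$ and $H_{\eff} = H + hR$, each Trotter segment $\pf_p(t/r)$ differs from $e^{-\ii H_{\eff} t/r}$ by at most $\norm{R_{\re}}(t/r)^{p+2}$ in operator norm, so iterating $r$ times and applying the triangle inequality yields $\norm{(\pf_p(t/r)^r - e^{-\ii H_{\eff}t})\ket{\psi}} \leq \norm{R_{\re}}t^{p+2}/r^{p+1}$. This accounts for the second term of the stated bound and reduces the task to controlling $\norm{(e^{-\ii H_{\eff}t} - e^{-\ii Ht})\ket{\psi}}$.

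Next I would use the spectral decomposition $H_{\eff} = \sum_k \lambda_k' \ket{\psi_k'}\bra{\psi_k'}$, where $\ket{\psi_k'}$ and $\lambda_k'$ are small perturbations of the eigenstates $\ket{\psi_k}$ and eigenvalues $\lambda_k$ of $H$. Writing the target state $\ket{\psi} = \ket{\psi_i}$ in this basis as $\ket{\psi} = \sum_k c_k \ket{\psi_k'}$ and noting $e^{-\ii Ht}\ket{\psi} = e^{-\ii\lambda t}\ket{\psi}$, the error reorganizes as
\begin{equation*}
    \bigl(e^{-\ii H_{\eff}t} - e^{-\ii Ht}\bigr)\ket{\psi} \;=\; \sum_k c_k\bigl(e^{-\ii\lambda_k' t} - e^{-\ii\lambda t}\bigr)\ket{\psi_k'},
\end{equation*}
so by orthonormality of $\{\ket{\psi_k'}\}$ the norm squared equals $\sum_k |c_k|^2\big|e^{-\ii\lambda_k' t} - e^{-\ii\lambda t}\big|^2$.

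The crucial step is the perturbative analysis, now applied to the single eigenpair $(\ket{\psi_i},\lambda)$. Standard first-order formulas give $\norm{\ket{\psi_i'} - \ket{\psi_i}} = \bigO(h\norm{R})$ and, thanks to the hypothesis $\braket{\psi|R|\psi} = 0$ which kills the linear-in-$h$ eigenvalue shift, $|\lambda_i' - \lambda| = \bigO(h^2\norm{R}^2)$; this is exactly the single-eigenvalue incarnation of the argument used in \cref{apd:thm:appendix-interference-bound}. Consequently $|c_i|^2 = 1 - \bigO(h^2\norm{R}^2)$ and $\sum_{k\neq i}|c_k|^2 = \bigO(h^2\norm{R}^2)$. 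Splitting the norm squared, the $k=i$ contribution is at most $|c_i|^2(\lambda_i'-\lambda)^2 t^2 = \bigO(h^4\norm{R}^4 t^2)$, which is subleading in the relevant small-error regime, while the $k\neq i$ contribution is bounded by $4\sum_{k\neq i}|c_k|^2 = \bigO(h^2\norm{R}^2)$. Taking square roots yields the $t$-independent estimate $\bigO(h\norm{R}) = \bigO(\norm{R}t^p/r^p)$, which combined with the remainder contribution of the first step gives the claimed bound.

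The main obstacle will be obtaining rigorous control of the eigenvector and eigenvalue perturbations when the spectrum of $H$ has small gaps near $\lambda$: the first-order expansion breaks down whenever $h\norm{R}$ exceeds the local gap, and a fully quantitative proof would invoke the $\epsilon$-regularized quantities $\Delta_H^\epsilon(R)$ and $\calR_H^\epsilon(R)$ of \cref{thm:tight-general-bound} to separate near-diagonal from far-off-diagonal contributions cleanly. At the asymptotic level of the present corollary, however, the single-eigenstate reduction inherits exactly the same conventions as \cref{apd:thm:appendix-interference-bound}, so the sketch above suffices.
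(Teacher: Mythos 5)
Your proposal is correct and follows essentially the same route as the paper's proof: reduce to the effective Hamiltonian $H+hR$ with $h=(t/r)^p$, absorb the $R_{\re}$ remainder by the triangle inequality, and apply perturbation theory in which the hypothesis $\braket{\psi|R|\psi}=0$ kills the first-order eigenvalue shift, leaving only the non-accumulating $\bigO(h\norm{R})$ eigenvector contribution plus a subleading $\bigO(h^2 t)$ eigenvalue term. Your expansion of $\ket{\psi}$ in the perturbed eigenbasis is simply more explicit bookkeeping of the paper's $(P+\delta P)$ step, and the small-gap caveat you raise applies equally to the paper's own argument.
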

\begin{proof}
    Similarly to the setting in~\cref{apd:thm:appendix-interference-bound}, let $h=(t/r)^p$ and $H_{\eff} = H + hR$ be the effective dynamics of PF$p$. Consider also the case that $h$ tends to zero, so we can treat it as a differential. Then going through the same proof as~\cref{apd:thm:appendix-interference-bound}, we can find that
    \begin{equation}
        \braket{\psi_i|\dd H|\psi_i} = \dd \lambda_i.
    \end{equation}
    Thus $\dd \lambda_i = 0$ if and only if $\braket{\psi_i|\dd H|\psi_i}=0$. Decompose $H$ as $P\Lambda P^\dagger$, where $P$ is unitary and $\Lambda$ is diagonal, then we can find that
    \begin{equation}
    \begin{aligned}
        \exp(-\ii H_{\eff} t) \ket{\psi} &\approx \exp(-\ii (H+hR)t) \ket{\psi}\\
        &= (P + \delta P)\exp(-\ii t(\Lambda + \delta \Lambda))(P + \delta P)^\dagger \ket{\psi}\\
        &= \exp(-\ii t(\Lambda + \delta \Lambda))\ket{\psi} + \bigO(\delta P)  \\
        &= \exp(-\ii \lambda_i t - \ii \delta \lambda_i t)\ket{\psi} + \bigO(\delta P).
    \end{aligned}
    \end{equation}
    Notice that $\left|\exp(-\ii \lambda_i t - \ii \delta \lambda_i t) -\exp(-\ii \lambda_i t)\right| \leq \delta \lambda_i t = \bigO(h^2t)$, thus we are done.
\end{proof}

\begin{corollary}
     Consider a $p$th-order product formula $\pf_p(\dt) = \exp(-\ii H\dt -\ii R \dt^{p+1} -\ii R_{\re}(\dt)\dt^{p+2})$.  
     Let $\ket{\psi}$ be any state. If for any $n$
    \begin{equation}
        \lim_{T \to \infty} \frac 1T \int_0^T \braket{\psi|e^{\ii Ht} RH^n e^{-\ii Ht}|\psi} \dd t = 0,
      \end{equation}
    then the total error of the evolution of $\ket{\psi}$ under the dynamics $\pf_p(t/r)^r$ is
    \begin{equation}
        \norm{(\pf_p(t/r)^r-e^{-\ii Ht})\ket{\psi}} = \bigO \left( \norm{R}\frac{t^p}{r^p} + \norm{R_{\re}}\frac{t^{p+2}}{r^{p+1}} \right).
    \end{equation}
\end{corollary}
\begin{proof}
    Consider the eigenbasis decomposition of $\ket{\psi}$, $\ket{\psi} = \sum a_\lambda \ket{\psi_\lambda}$, then we can find that for any $t$ and $n$
    \begin{equation}
        e^{-\ii Ht}\ket{\psi}\bra{\psi}e^{\ii Ht} = \sum_{\lambda_1, \lambda_2} e^{-\ii Ht(\lambda_1 - \lambda_2)} a_{\lambda_1}\overline{a_{\lambda_2}} \ket{\psi_{\lambda_1}}\bra{\psi_{\lambda_2}}.
    \end{equation}
    Taking average over $t$, we have
    \begin{equation}
        \lim_{T \to \infty} \frac 1T \int_0^T e^{-\ii Ht}\ket{\psi}\bra{\psi}e^{\ii Ht}\dd t = \sum_\lambda |a_\lambda|^2 \ket{\psi_\lambda}\bra{\psi_\lambda}.
    \end{equation}
    Thus we can find that
    \begin{equation}
    \begin{aligned}
        \lim_{T \to \infty} \frac 1T \int_0^T \braket{\psi|e^{\ii Ht} R H^n e^{-\ii H t}} \dd t &= \lim_{T \to \infty} \tr \int_0^T e^{-\ii Ht}\ket{\psi}\bra{\psi}e^{\ii Ht} H^nR\dd t \\
        &= \sum_\lambda |a_\lambda|^2 \tr \ket{\lambda}\bra{\lambda} H^n R \\
        &= \sum_\lambda |a_\lambda|^2 \lambda^n \braket{\psi_\lambda | R |\psi_\lambda}.
    \end{aligned}
    \end{equation}
    Since all $\lambda$ are distinct, $\sum_\lambda |a_\lambda|^2 \lambda^n \braket{\psi_\lambda | R |\psi_\lambda} = 0$ for all $n$ is equivalent to $\braket{\psi_\lambda | R |\psi_\lambda} = 0$ holds for all $\lambda$ (since the Vandermonde determinant is nonzero), so the error caused by $R$ interferes when the initial state is $\ket{\psi}$.
\end{proof}

\section{Proof of tight upper bound (exact interference)}\label{apd:sec:proof-exact-interference}
In this section, we will first provide a tighter upper bound for error interference for general error that can be expressed in the form $\exp(H+\ii hR)$. Then we will show the exact trotter error formula for PF1.
We need a technical lemma dealing with the trotter error:
\begin{lemma}[Variation-of-parameters formula, Theorem 4.9 of \cite{knapp2007basic}]\label{apd:lem:variation-of-param}
    Let $\mathscr{H}(\tau), \mathscr{R}(\tau)$ be continuous operator-valued functions defined for $\tau \in \R$. Then the first order differential equation
    \begin{equation}
        \frac{\d}{\d t}\mathscr{U}(t) = \mathscr{H}(t)\mathscr{U}(t)+\mathscr{R}(t),\, \mathscr{U}(0) \text{ known},
    \end{equation}
    has a unique solution given by the variation-of-parameters formula
    \begin{equation}
        \mathscr{U}(t) = \exp_{\calT}\qty(\int_0^t\d \tau \mathscr{H}(\tau)) \mathscr{U}(0) + \int_0^t \d \tau_1 \exp_{\calT} \qty( \int_{\tau_1}^t \d \tau_2 \mathscr{H}(\tau_2) ) \mathscr{R}(\tau_1),
    \end{equation}
    where $\exp_\calT$ refers to the path-ordered matrix exponential.
\end{lemma}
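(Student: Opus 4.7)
The plan is to prove the formula by introducing the homogeneous propagator and then verifying that the candidate expression solves the inhomogeneous initial value problem, with uniqueness handled by a standard Grönwall argument. Let $\Phi(t) \coloneqq \exp_\calT\!\qty(\int_0^t \d\tau\, \mathscr{H}(\tau))$ denote the solution of the homogeneous equation $\frac{\d}{\d t}\Phi(t) = \mathscr{H}(t)\Phi(t)$ with $\Phi(0) = I$; existence and invertibility of $\Phi(t)$ on any compact interval follow from Picard iteration applied to the homogeneous ODE, using continuity of $\mathscr{H}$.

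The first key step is to establish the cocycle identity
\begin{equation*}
\exp_\calT\!\qty(\int_{\tau_1}^{t}\d\tau_2\, \mathscr{H}(\tau_2)) = \Phi(t)\Phi(\tau_1)^{-1}.
\end{equation*}
I would prove this by checking that both sides, viewed as functions of $t$ with $\tau_1$ fixed, solve the same homogeneous ODE with the same initial value $I$ at $t = \tau_1$, and then invoking uniqueness for the homogeneous problem. Using this identity, the proposed formula can be rewritten in the compact form
\begin{equation*}
\mathscr{U}(t) = \Phi(t)\mathscr{U}(0) + \Phi(t) \int_0^t \d\tau_1\, \Phi(\tau_1)^{-1}\mathscr{R}(\tau_1).
\end{equation*}

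Differentiating this expression via the product rule and the fundamental theorem of calculus yields $\frac{\d}{\d t}\mathscr{U}(t) = \mathscr{H}(t)\Phi(t)\mathscr{U}(0) + \mathscr{H}(t)\Phi(t)\int_0^t \Phi(\tau_1)^{-1}\mathscr{R}(\tau_1)\d\tau_1 + \Phi(t)\Phi(t)^{-1}\mathscr{R}(t) = \mathscr{H}(t)\mathscr{U}(t) + \mathscr{R}(t)$, and the initial condition is recovered since the integral vanishes at $t = 0$. For uniqueness, the difference $\Delta(t)$ of any two solutions satisfies $\frac{\d}{\d t}\Delta(t) = \mathscr{H}(t)\Delta(t)$ with $\Delta(0) = 0$; applying Grönwall's inequality to $\norm{\Delta(t)}$ forces $\Delta \equiv 0$ on any compact subinterval, hence everywhere.

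The main obstacle is handling the noncommutativity of $\mathscr{H}(\tau)$ at different times: naive manipulation as if $\exp_\calT$ were an ordinary exponential fails, and every algebraic identity involving the path-ordered exponential must be justified through the uniqueness theorem for linear ODEs rather than through direct exponent arithmetic. In particular, the cocycle identity above, which is the linchpin of the derivation, must itself be derived through this uniqueness detour rather than assumed. Once it is secured, the remaining calculation parallels the scalar case of variation of parameters almost verbatim, so the bulk of the work is conceptual rather than computational.
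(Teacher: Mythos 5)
Your proposal is correct, but note that the paper does not prove this lemma at all: it is imported verbatim as Theorem 4.9 of the cited reference (Knapp), so there is no internal proof to compare against. Your argument is the standard one --- define the propagator $\Phi(t)=\exp_\calT\qty(\int_0^t \d\tau\,\mathscr{H}(\tau))$, establish the cocycle identity $\exp_\calT\qty(\int_{\tau_1}^t \d\tau_2\,\mathscr{H}(\tau_2))=\Phi(t)\Phi(\tau_1)^{-1}$ via uniqueness for the homogeneous equation, verify the candidate solution by differentiation, and get uniqueness from Gr\"onwall --- and it is sound; the only small gloss is that invertibility of $\Phi(t)$ does not come from Picard iteration alone but from the companion backward equation $\frac{\d}{\d t}\Psi(t)=-\Psi(t)\mathscr{H}(t)$, $\Psi(0)=I$, together with the observation that $\frac{\d}{\d t}\qty(\Psi(t)\Phi(t))=0$, and that the uniqueness statement you invoke for the cocycle identity is the same Gr\"onwall argument you defer to the end, so logically it should be established first.
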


\begin{theorem}[Tight upper bound for general error interference]\label{apd:thm:tight-general-bound}
    Let $H$ and $R$ be Hermitian, and $h$ be any real parameter. Let $\{\ket{\psi_i}\}$ and $\{\ket{\psi_i'}\}$ to be the eigenvectors of $H$ and $H+hR$ respectively and $\{\lambda_i\}, \{\lambda_i'\}$ to be the corresponding eigenvalues. Then $R$ can be decomposed as $R = \sum b_{jk}\ket{\psi_j}\bra{\psi_k'}$. For any $\epsilon>0$, let $\Delta_H^\epsilon(R) = \max\left\{\norm{\sum_{|\lambda_j - \lambda_k'| <\epsilon}b_{jk}\ket{\psi_j}\bra{{\psi}_k'}}, \norm{\sum_{|\lambda_j - \lambda_k'| <\epsilon}\sgn(\lambda_j-\lambda_k')b_{jk}\ket{\psi_j}\bra{{\psi}_k'}}\right\}$. Define $\calR_H^\epsilon(R) = \sum_{|\lambda_j - \lambda_k'|\geq \epsilon}\frac{1}{\lambda_j - \lambda_k'} b_{jk} \ket{\psi_j}\bra{\psi_k'}$.  Then we have
    \ifnum\aqis=0
    \begin{equation}\label{apd:eq:tight-interference-general}
        \norm{e^{-\ii(H+hR)t} - e^{-\ii Ht}} \leq (1+t\epsilon)h\max_{ |\epsilon'|\leq \epsilon}\Delta_H^{\epsilon'}(R)t + 2h\norm{\calR_H^\epsilon(R)}.
    \end{equation}
    \else
    \begin{align}
        &\norm{\exp(-\ii(H+hR)t) - \exp(-\ii Ht)} \\ 
        &\leq h(1+t\epsilon)\max_{ |\epsilon'| \leq \epsilon}\norm{\Delta_H^{\epsilon}(R)}t + 2h\norm{\calR_H^\epsilon(R)}.
    \end{align}
    \fi
\end{theorem}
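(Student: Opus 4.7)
My plan is to derive the bound via the variation-of-parameters formula (\cref{apd:lem:variation-of-param}), followed by an explicit diagonalization of the resulting integral in the mixed eigenbasis of $H$ and $H+hR$. Setting $U(t)=e^{-\ii Ht}$ and $U_{\eff}(t)=e^{-\ii(H+hR)t}$, the difference $U_{\eff}(t)-U(t)$ satisfies $\frac{d}{dt}(U_{\eff}-U)=-\ii H(U_{\eff}-U)-\ii h R\,U_{\eff}$ with vanishing initial condition, and the lemma yields
\begin{equation*}
e^{-\ii(H+hR)t}-e^{-\ii Ht}=-\ii h\int_{0}^{t} e^{-\ii H(t-s)}\,R\,e^{-\ii(H+hR)s}\,ds.
\end{equation*}

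I would then substitute the mixed-basis decomposition $R=\sum_{jk}b_{jk}\ket{\psi_j}\bra{\psi_k'}$. Because $H\ket{\psi_j}=\lambda_j\ket{\psi_j}$ and $(H+hR)\ket{\psi_k'}=\lambda_k'\ket{\psi_k'}$, the integrand collapses to $\sum_{jk}b_{jk}e^{-\ii\lambda_j(t-s)}e^{-\ii\lambda_k' s}\ket{\psi_j}\bra{\psi_k'}$, and the elementary $s$-integral produces a coefficient $e^{-\ii\lambda_j t}g(\omega_{jk},t)$ with $\omega_{jk}:=\lambda_j-\lambda_k'$ and $g(\omega,t):=(e^{\ii\omega t}-1)/(\ii\omega)$ (using $g(0,t)=t$). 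Splitting the index set into the far part $|\omega_{jk}|\geq\epsilon$ and the near part $|\omega_{jk}|<\epsilon$ produces exactly the two summands on the right-hand side of the theorem.

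For the far part, I would exploit that $(e^{-\ii\lambda_k' t}-e^{-\ii\lambda_j t})/(\lambda_j-\lambda_k')$ is the matrix element of a telescoping combination of unitary flows, so the sum rearranges into the clean identity
\begin{equation*}
-h\big[\calR_H^\epsilon(R)\,e^{-\ii(H+hR)t}-e^{-\ii Ht}\,\calR_H^\epsilon(R)\big].
\end{equation*}
Unitarity of both exponentials combined with the triangle inequality then bounds this contribution by $2h\,\norm{\calR_H^\epsilon(R)}$, matching the second term in the statement.

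The main obstacle is the near-diagonal part, where the $(1+t\epsilon)$ prefactor and the maximum over $|\epsilon'|\leq\epsilon$ must be recovered. Using $g(\omega,t)=\int_0^t e^{\ii\omega s}ds$, the near sum is $\int_0^t e^{-\ii H(t-s)}M_{\text{near}}\,e^{-\ii(H+hR)s}ds$ with $M_{\text{near}}=\sum_{|\omega_{jk}|<\epsilon}b_{jk}\ket{\psi_j}\bra{\psi_k'}$, giving the naive estimate $ht\,\norm{M_{\text{near}}}$. To sharpen this to the stated form I would split $g$ into its real part $\sin(\omega t)/\omega$ (even in $\omega$) and its imaginary part $(1-\cos(\omega t))/\omega$ (odd, carrying the sign of $\omega$): the symmetric piece is controlled by $\norm{\sum_{|\omega_{jk}|<\epsilon}b_{jk}\ket{\psi_j}\bra{\psi_k'}}$, while the antisymmetric piece is controlled by $\norm{\sum_{|\omega_{jk}|<\epsilon}\sgn(\omega_{jk})b_{jk}\ket{\psi_j}\bra{\psi_k'}}$, together populating the two alternatives in $\Delta_H^{\epsilon}(R)$. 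The max over intermediate thresholds $\epsilon'\in[-\epsilon,\epsilon]$ then arises from a layer-cake decomposition that slices the range of $|\omega_{jk}|$ and bounds each slice against the corresponding $\Delta_H^{\epsilon'}(R)$, while the extra $t\epsilon$ slack in the prefactor comes from estimating the residual $\int_0^t(e^{\ii\omega s}-1)ds$ by $|\omega|t^2/2\leq\epsilon t^2/2$. Organizing these slicings so the constants recombine cleanly into $(1+t\epsilon)$ is the most delicate step.
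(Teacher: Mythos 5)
Your proposal follows the paper's own proof essentially step for step: the same variation-of-parameters lemma, the same mixed-eigenbasis expansion of the integrand into $\sum_{jk}b_{jk}e^{-\ii\lambda_j(t-s)}e^{-\ii\lambda_k' s}\ket{\psi_j}\bra{\psi_k'}$, and the same near/far split in which the far part telescopes into $-h\bigl(\calR_H^\epsilon(R)e^{-\ii(H+hR)t}-e^{-\ii Ht}\calR_H^\epsilon(R)\bigr)$ and is bounded by $2h\norm{\calR_H^\epsilon(R)}$, with the near part handled via the $\Delta_H^{\epsilon'}(R)$ quantities exactly as the paper sketches. The only remark worth adding is that the "delicate" sharpening you worry about is not actually needed: your naive estimate $ht\norm{M_{\text{near}}}$ is already at most $ht\,\Delta_H^{\epsilon}(R)\le (1+t\epsilon)\,h\max_{|\epsilon'|\le\epsilon}\Delta_H^{\epsilon'}(R)\,t$, so it closes the bound as stated (the paper's own justification of this step is no more detailed than yours).
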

\begin{proof}
    Define function $\calL(t) = e^{-\ii (H+hR)t}$, then we can find that
    \begin{align*}
        \frac{\d }{\d t}\calL(t) &= -\ii (H+hR) e^{-\ii (H+hR)t} \\
        &= -\ii H \calL(t) - \ii R e^{-\ii(H+hR)t}.
    \end{align*}
    Thus according to \cref{apd:lem:variation-of-param}, we can find that
    \begin{align*}
        \calL(t) = e^{-\ii H t} + e^{-\ii H t} \int_0^t e^{\ii H \tau} (-\ii h R) e^{-\ii(H+hR)\tau} \d\tau.
    \end{align*}
    Then it is enough to bound $\norm{\int_0^t e^{\ii H \tau} (-\ii h R) e^{-\ii(H+hR)\tau} \d\tau}$. Notice that
    \begin{align*}
        \int_0^t e^{\ii H \tau} (-\ii h R) e^{-\ii(H+hR)\tau} \d\tau &= \int_0^t \sum_{j,k} e^{\ii(\lambda_j-\lambda_k')\tau} b_{jk}\ket{\psi_j}\bra{\psi_k'} \d \tau \\
        &= \sum_{j, k} \frac{1 - e^{\ii(\lambda_j-\lambda_k')t}}{\lambda_j-\lambda_k'}hb_{jk}\ket{\psi_j}\bra{\psi_k'} \\
        &= \sum_{|\lambda_j-\lambda_k'|< \epsilon} \frac{1 - e^{\ii(\lambda_j-\lambda_k')t}}{\lambda_j-\lambda_k'}hb_{jk}\ket{\psi_j}\bra{\psi_k'} + (-\ii h \calR_H^\epsilon(R)) - e^{\ii Ht} (-\ii h\calR_H^\epsilon(R))e^{-\ii (H+hR) t} \\
        &\leq h(1+t\epsilon) \max_{|\epsilon'|\leq \epsilon} \norm{\Delta_H^{\epsilon'}(R)}t + 2h \norm{\calR_H^\epsilon(R)}.
    \end{align*}
    Where the last line stems from the fact that $\sum_{|\lambda_j-\lambda_k'|< \epsilon} \frac{1 - e^{\ii(\lambda_j-\lambda_k')t}}{\lambda_j-\lambda_k'}hb_{jk}\ket{\psi_j}\bra{\psi_k'} - \ii\Delta_H^\epsilon(R)t$ can be written as linear combination of $\Delta_H^{\epsilon'}(R)$ with $\epsilon'\leq \epsilon$ by treating real and imaginary parts separately and bound the linear combination with triangle inequality. So we are done.
\end{proof}
\begin{theorem}[Tight upper bounds for PF1]\label{apd:thm:tight_upper_pf1}
    Assume $H_l$ to be Hermitians. 
    Let $\pf_1(\dt) = \prod_{l}^{\rightarrow} e^{-\ii \dt H_l}$ be the first-order product formula for $H=\sum_l H_l$. Set $\dt = t/r$ and
    $\pf_1(\dt)$ is repeated $r$ times to simulate $e^{-\ii H t}$. 
    Define $R\coloneqq  \sum_{j<k}[H_j, H_k]$. 
    Define $\ket{\psi_i}$ and $\ket{\psi_i'}$ to be the eigenvectors of $H$ and $H+ \frac{\dt}{2\ii} R$, and the corresponding eigenvectors are $\lambda_i$ and $\lambda_i'$. 
    Assume that $R$ can be decomposed in the basis of $\ket{\psi_j}$ and $\ket{\psi_k'}$ as $R = \sum b_{jk} \ket{\psi_j}\bra{\psi_k'}$. 
    For any $\epsilon$, define $\Delta_H^\epsilon(R)$ and $\calR_H^\epsilon(R)$ similarly to the definition in~\cref{apd:thm:tight-general-bound}. 
    Then we have that
    \begin{equation}\label{apd:eq:tight-interference-pf1}
        \norm{\pf_1(\dt)^r - e^{-\ii Ht}} \leq \cml_1(H)\frac{t^3}{r^2} + (1+t\epsilon)\max_{|\epsilon'|\leq \epsilon}\norm{\Delta_H^{\epsilon'}(R)} \frac{t^2}{r} + 2\norm{\calR_H^\epsilon(R)}\frac{t}{r} + \cml_2(H) \frac{t^4}{r^3},
    \end{equation}
    where
    \begin{align*}
        \cml_1(H) &= \frac{1}{2}\sum_{i}\norm{\sum_{i<l<k}[H_i,[H_l,H_k]]} + \frac{1}6\sum_{i}\norm{\sum_{l>i}[H_i, [H_i, H_l]]},\\
        \cml_2(H) &= \frac{1}{12}\sum_i\norm{\sum_{i<l<k}[H_i,[H_i,[H_l,H_k]]]}.
    \end{align*}
\end{theorem}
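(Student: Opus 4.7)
The plan is a two-stage reduction. I would insert an auxiliary unitary $\tilde U := e^{-\ii \tilde H\dt}$ with $\tilde H := H + \tfrac{\dt}{2\ii}R$, chosen so that its generator absorbs exactly the first two orders of the BCH expansion of $\log \pf_1(\dt)$, and split
\begin{equation*}
    \|\pf_1(\dt)^r - e^{-\ii Ht}\| \leq \|\pf_1(\dt)^r - \tilde U^r\| + \|\tilde U^r - e^{-\ii Ht}\|.
\end{equation*}
The first summand then captures only the ``residual'' Trotter error of order $\dt^3$ and higher, for which a triangle inequality over $r$ segments suffices; the second summand is the interference-sensitive piece, tailor-made for \cref{apd:thm:tight-general-bound}.

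For the residual term, I would run BCH on $\pf_1(\dt) = \prod_{l}^{\rightarrow} e^{-\ii H_l \dt}$ order by order in $\dt$ to obtain $\pf_1(\dt) = \exp(-\ii H\dt - \tfrac{\dt^2}{2}R - \ii \dt^3 W_3 - \ii\dt^4 W_4(\dt))$, and then bound $\|W_3\| \leq \cml_1(H)$ and $\|W_4(\dt)\| \leq \cml_2(H)$ by using Jacobi identities together with the total order of the product to collapse apparently distinct BCH contributions into the $1/2$-weighted ``mixed-index'' triples $[H_i,[H_l,H_k]]$ with $i<l<k$, the $1/6$-weighted ``repeated-index'' triples $[H_i,[H_i,H_l]]$ with $l>i$, and the analogous $1/12$-weighted quadruples. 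Because $\tilde U$ agrees with $\pf_1(\dt)$ on the first two orders in the exponent, $\|\pf_1(\dt) - \tilde U\| \leq \cml_1(H)\dt^3 + \cml_2(H)\dt^4$, and the standard unitary-telescoping identity yields
\begin{equation*}
    \|\pf_1(\dt)^r - \tilde U^r\| \leq r\|\pf_1(\dt)-\tilde U\| \leq \cml_1(H)\frac{t^3}{r^2} + \cml_2(H)\frac{t^4}{r^3}.
\end{equation*}

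For the interference-sensitive term, observe that $R$ is anti-Hermitian (since $[H_j,H_k]^\dagger = -[H_j,H_k]$), so $\tilde H - H = h R'$ with real $h := \dt/2 = t/(2r)$ and Hermitian $R' := -\ii R$. Applying \cref{apd:thm:tight-general-bound} to the triple $(H, R', h)$ at scale $\epsilon$, and noting that multiplication of $R$ by the unit-modulus scalar $-\ii$ leaves the matrix-element magnitudes $|b_{jk}|$ and the gap condition $|\lambda_j - \lambda_k'|$ invariant, so that $\|\Delta_H^{\epsilon'}(R')\|=\|\Delta_H^{\epsilon'}(R)\|$ and likewise for $\calR_H^\epsilon$, I obtain after substituting $h = t/(2r)$ the two middle terms of \cref{apd:eq:tight-interference-pf1}. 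Combining this estimate with the residual bound above by one more triangle inequality completes the proof.

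The main obstacle is the combinatorial bookkeeping in the BCH expansion: isolating $W_3$ so that its norm is bounded \emph{exactly} by the two sums defining $\cml_1(H)$ with the stated prefactors $1/2$ and $1/6$ (and similarly $W_4$ by $\cml_2$ with $1/12$) requires using Jacobi identities to merge apparently distinct terms that arise from the Hall-basis form of BCH when specialized to the fully ordered product $\prod_l^{\rightarrow}$, rather than simply citing the universal BCH coefficients. Once those norm identifications are in hand, the rest of the argument is a mechanical composition of the triangle inequality with \cref{apd:thm:tight-general-bound}.
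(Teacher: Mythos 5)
Your overall architecture coincides with the paper's: insert the effective unitary $\tilde U=\exp\!\big(-\ii\big(H+\tfrac{\dt}{2\ii}R\big)\dt\big)$, bound $\norm{\pf_1(\dt)^r-\tilde U^r}$ by $r$ times a per-step estimate, and bound $\norm{\tilde U^r-e^{-\ii Ht}}$ by \cref{apd:thm:tight-general-bound} with $h=\dt/2$ and the Hermitian perturbation $-\ii R$; that last step is correct and is exactly how the paper concludes. The genuine gap is in your derivation of the per-step estimate $\norm{\pf_1(\dt)-\tilde U}\leq \cml_1(H)\dt^3+\cml_2(H)\dt^4$. You propose to read it off from the universal BCH coefficients of the ordered product and then ``collapse'' the contributions into the sums defining $\cml_1$ and $\cml_2$ via Jacobi identities. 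This cannot reproduce the stated constants: already for two terms, the BCH exponent of $e^{-\ii H_1\dt}e^{-\ii H_2\dt}$ contains at third order both $[H_1,[H_1,H_2]]$ and $[H_2,[H_2,H_1]]$ (each with weight $1/12$), and $[H_2,[H_2,H_1]]$ is not expressible through $[H_1,[H_1,H_2]]$ by Jacobi manipulations (they are independent in the free Lie algebra); yet $\cml_1(H)$ contains only the terms $[H_i,[H_i,H_l]]$ with $l>i$, so when $\norm{[H_2,[H_2,H_1]]}\gg\norm{[H_1,[H_1,H_2]]}$ your claimed bound does not follow from the truncated series. The same issue recurs at fourth order, where $\cml_2(H)$ keeps only $[H_i,[H_i,[H_l,H_k]]]$ with $i<l<k$, a small subset of the fourth-order BCH terms. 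Moreover, an order-by-order BCH expansion gives no uniform-in-$\dt$ control of the resummed remainder $W_4(\dt)$ (the theorem carries no smallness assumption on $\dt\norm{H_l}$), so even the existence of your normal form needs an integral-remainder argument rather than a formal series.

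What the paper does instead, and what your proposal is missing, is an exact ordered-product representation: differentiate $\pf_1(t)=e^{-\ii At}e^{-\ii Bt}$, pull out the generator $-\ii A-\ii B-t[A,B]$, rewrite the leftover via the variation-of-parameters formula (\cref{apd:lem:variation-of-param}) as a double integral whose integrand is a conjugation of $\ii[A,[A,B]]$ alone, and apply variation of parameters a second time. This asymmetric treatment~---~the later factor $e^{-\ii Bt}$ is never commuted through~---~is precisely what yields the one-sided per-step constant $\tfrac16\norm{[A,[A,B]]}t^3$ with no $[B,[B,A]]$ contribution, and an induction on the number of summands then produces exactly $\cml_1(H)$ and $\cml_2(H)$, including the $1/2$- and $1/12$-weighted cross terms coming from $[-\ii H_1 t,\,-\sum_{2\le l<k}[H_l,H_k]t^2/2]$. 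If you replace your BCH bookkeeping with this integral-representation-plus-induction step, the remainder of your argument (telescoping over $r$ segments and invoking \cref{apd:thm:tight-general-bound}) goes through as you describe.
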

\begin{proof}
    First we will deal with the case $\pf_1(\dt) = e^{-\ii \dt A} e^{-\ii \dt B}$. Then we can find that
    \begin{align*}
        \frac{\dd}{\dd t} \pf_1(t) &= -\ii t A e^{-\ii\dt A}e^{-\ii t B} + e^{-\ii\dt A}(-\ii t B)e^{-\ii t B} \\
        &= (-\ii A + -\ii B - t[A, B])\pf_1(t) +(e^{-\ii A t}(-\ii B) - (-\ii B)e^{-\ii Bt})e^{-\ii Bt} + t[A, B]e^{-\ii At}e^{-\ii Bt}.
    \end{align*}
    Notice that according to the variation-of-parameters formula, we have
    \begin{align*}
        e^{-\ii A t}(-\ii B) - (-\ii B)e^{-\ii A t} &= - \int_0^t \dd \tau e^{-\ii A \tau_1} [A, B] e^{-\ii A (t-\tau_1)}.
    \end{align*}
    Multiply the equation with $e^{-\ii B t}$ on the right side, and add it with $t[A,B]e^{-\ii At}e^{-\ii Bt}$, we have
    \begin{align*}
        (e^{-\ii A t}(-\ii B) - (-\ii B)e^{-\ii At})e^{-\ii Bt} + t[A, B]e^{-\ii At}e^{-\ii Bt} &= \int_0^t \dd \tau_1 \qty(e^{-\ii A \tau_1 } (-[A, B]) e^{\ii A \tau_1} + [A, B]) e^{-\ii At}e^{-\ii Bt} \\
        &= \int_0^t \dd \tau_1 \int_0^{\tau_1}  \dd \tau_2 e^{-\ii A\tau_2}\qty(\ii[A, [A, B]])e^{\ii A\tau_2} e^{-\ii At} e^{-\ii Bt}.
    \end{align*}
    Apply variation-of-parameters formula again, we have
    \begin{align*}
        \pf_1(t) = e^{-\ii (A+B)t - [A,B]t^2/2} + \int_0^t \dd \tau_1 e^{-\ii (A+B)\tau_1 - [A,B]\tau_1^2/2} \int_0^{\tau_1} \dd \tau_2 \int_0^{\tau_2} \dd \tau_2 e^{-\ii A \tau_3} (\ii [A, [A, B]]) e^{-\ii (\tau_1-\tau_3)A} e^{-\ii \tau_1 B}.
    \end{align*}
    Thus by triangle inequality, we have
    \begin{align*}
        \norm{\pf_1(t) - e^{-\ii(A+B)t - [A,B]t^2/2}} \leq \frac{1}{6} \norm{[A,[A,B]]}.
    \end{align*}
    Then by induction on number of terms, we can show that for general $\pf_1(t) = \prod_l^\rightarrow e^{-\ii H_l t}$, we have
    \begin{align*}
        \norm{\pf_1(t) - \exp(-\ii H t + R t^2/2)} \leq \norm{\cml_1(H)} t^3 + \norm{\cml_2(H)} t^4.
    \end{align*}
    In fact, if we set $\calE_{n-1} = \frac{t^3}{2}\sum_{i\geq 2}\norm{\sum_{i<l<k}[H_i,[H_l,H_k]]} + \frac{t^3}6\sum_{i\geq 2}\norm{\sum_{l>i}[H_i, [H_i, H_l]]} + \frac{t^4}{12}\sum_{i \geq 2}\norm{\sum_{i<l<k}[H_i,[H_i,[H_l,H_k]]]}$, then we have
    \begin{align*}
        &\norm{\prod_l^\rightarrow e^{-\ii H_l t} - e^{-\ii \sum_l H_l t + Rt^2/2}} \\&\leq \norm{\prod_l^\rightarrow e^{-\ii H_l t} - e^{-\ii H_1 t}e^{-\ii \sum_{l\geq 2}H_l t - \sum_{2\leq l<k}[H_l, H_k]}} + \calE_{n-1} \\
        &\leq \norm{e^{-\ii H_1 t -\ii \sum_{l \geq 2}H_l t - \sum_{2\leq l<k}[H_l, H_k]t^2/2 + [-\ii H_1 t, -\ii \sum_{l \geq 2}H_l t - \sum_{2\leq l<k}[H_l, H_k]t^2/2]} - e^{-\ii \sum_l H_l t + Rt^2/2}} \\&+ \frac{1}{6} \norm{[H_1, [H_1, \sum_{l \geq 2} H_l - \ii \sum_{2\leq l<k}[H_l, H_k]]]} + \calE_{n-1} \\
        &\leq \norm{[-\ii H_1 t, - \sum_{2\leq l<k}[H_l, H_k]t^2/2]} + \frac{t^3}{6} \norm{[H_1, [H_1, \sum_{l \geq 2} H_l]]} + \frac{t^4}{12} \norm{[H_1,[H_1, \sum_{2\leq l < k}[H_l, H_k]]]} + \calE_{n-1} \\
        &\leq \frac{t^3}{2} \norm{\sum_{2\leq l<k}[H_1,[H_l, H_k]]} + \frac{t^3}{6}\norm{[H_1,[H_1, \sum_{l\geq 2} H_l]]} + \frac{t^4}{12} \norm{[H_1,[H_1, \sum_{2\leq l < k}[H_l, H_k]]]} + \calE_{n-1} \\
        &= \frac{t^3}{2}\sum_{i}\norm{\sum_{i<l<k}[H_i,[H_l,H_k]]} + \frac{t^3}6\sum_i\norm{\sum_{l>i}[H_i, [H_i, H_l]]} + \frac{t^4}{12}\sum_i\norm{\sum_{i<l<k}[H_i,[H_i,[H_l,H_k]]]}.
    \end{align*}
    Then the bound follows immediately from \cref{apd:thm:tight-general-bound}.
\end{proof}

\section{Proof of approximate interference of Trotter errors}\label{apd:sec:proof_approximate_interference}
In this section, we present the proof of theorems on approximate interference.
\begin{lemma}\label{apd:thm:approx-interference-bound}
    Consider a $p$th-order product formula $\pf_p(\dt) = \exp(-\ii H\dt -\ii R \dt^{p+1} -\ii R_{\re}(\dt)\dt^{p+2})$. 
    If the leading-order error term $R=R_1+R_2$, and $R_1$ satisfies the \crtlnameref{def:orthogonality}, then the total error $\norm{\pf_p(t/r)^r - e^{-\ii Ht}}$ can be bounded by
    \begin{align}
        \bigO\left( \norm{R_2}\frac{t^{p+1}}{r^p} +
        \norm{R_1}\frac{t^p}{r^p} + \norm{R_{\re}} \frac{t^{p+2}}{r^{p+1}} \right).
    \end{align}  
\end{lemma}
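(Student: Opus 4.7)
The strategy is a hybrid argument that isolates the interfering component of the leading-order error from the non-interfering remainder. I would introduce the intermediate evolution
\begin{equation*}
    V(\dt) \coloneqq \exp\!\left(-\ii H \dt - \ii R_1 \dt^{p+1}\right),
\end{equation*}
which retains only the orthogonal (interfering) part $R_1$ of the leading-order error. Since the exponent is $\dt$-independent up to an overall scalar factor, iteration gives $V(\dt)^r = \exp(-\ii(H + R_1 \dt^p)\,t)$. The triangle inequality yields
\begin{equation*}
    \norm{\pf_p(\dt)^r - e^{-\ii H t}} \le \norm{\pf_p(\dt)^r - V(\dt)^r} + \norm{V(\dt)^r - e^{-\ii H t}},
\end{equation*}
and the task reduces to bounding each of these two terms.

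For the second term, $V(\dt)$ has the form of a $p$th-order product-formula evolution with leading-order error $R_1$ and vanishing higher-order remainder. Since $R_1$ satisfies the orthogonality condition by hypothesis, \cref{apd:thm:appendix-interference-bound} applies verbatim and yields $\norm{V(\dt)^r - e^{-\ii H t}} = \bigO(\norm{R_1}\,t^p/r^p)$. For the first term, I would use a standard telescoping argument together with the inequality $\norm{e^{-\ii A} - e^{-\ii B}} \le \norm{A - B}$ for Hermitian $A, B$ (proved by writing the difference as an integral of a derivative via Duhamel's formula, each factor being unitary). Applied to the exponents of $\pf_p(\dt)$ and $V(\dt)$, this gives the per-step bound $\norm{\pf_p(\dt) - V(\dt)} \le \norm{R_2}\dt^{p+1} + \norm{R_{\re}}\dt^{p+2}$. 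Unitarity of the factors then yields
\begin{equation*}
    \norm{\pf_p(\dt)^r - V(\dt)^r} \le r\,\norm{\pf_p(\dt) - V(\dt)} = \bigO\!\left(\norm{R_2}\frac{t^{p+1}}{r^p} + \norm{R_{\re}}\frac{t^{p+2}}{r^{p+1}}\right),
\end{equation*}
and combining the two bounds gives exactly the advertised estimate.

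I do not expect a serious obstacle: the proof is essentially a clean hybrid argument, and each ingredient (interference bound for an orthogonal leading error, telescoping for unitary products, and the Hermitian-exponential Lipschitz bound) is standard. The only point that deserves genuine care is verifying that $V(\dt)$ genuinely satisfies the hypotheses of \cref{apd:thm:appendix-interference-bound}; this is immediate by construction, since $V(\dt)$ is manifestly of the form $\exp(-\ii H \dt - \ii R_1 \dt^{p+1} - \ii R_{\re}'\dt^{p+2})$ with $R_{\re}' = 0$ and $R_1$ orthogonal. A minor subtlety is that one must treat $\pf_p(\dt)$ as $\exp(-\ii H\dt - \ii R\dt^{p+1} - \ii R_{\re}(\dt)\dt^{p+2})$ rather than a literal product of factors when invoking the Lipschitz bound; this is precisely the form given in the hypothesis, so no further work is needed.
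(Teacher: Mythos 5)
Your proposal is correct and follows essentially the same route as the paper: both are hybrid arguments that peel off the non-orthogonal piece $R_2$ (and the remainder $R_{\re}$) with a linear-in-$r$ bound and then invoke the interference theorem (\cref{apd:thm:appendix-interference-bound}) for the evolution generated by $H + R_1\,\dt^{p}$. The only cosmetic difference is that the paper bounds $\norm{e^{-\ii t(H+hR)}-e^{-\ii t(H+hR_1)}}\le th\norm{R_2}$ directly via the variation-of-parameters formula, whereas you telescope a per-step Hermitian-exponential Lipschitz bound — the two yield the same estimate.
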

\begin{proof}
    Let $h=\frac{t^p}{r^p}$. Take derivation of $\exp(-\ii t(H+hR))$ over $t$, we have
    \begin{equation}
        \frac{\d}{\d t} \exp(-\ii t(H+hR)) = -\ii( H+hR )= -\ii(H+hR_1) - \ii hR_2.
    \end{equation}
    According to \cref{apd:lem:variation-of-param}, we have
    \begin{equation}\label{apd:eq:error-var-of-param}
        \exp(-\ii t(H+hR)) = \exp(-\ii t(H+hR_1)) + \int_0^t \d \tau \exp\qty( -i(t-\tau)H )(-\ii hR_2).
    \end{equation}
    Thus
    \begin{equation}\label{apd:eq:R2-error-bound}
        \norm{\exp(-\ii t(H+hR)) - \exp(-\ii t(H+hR_1))} \leq th \norm{R_2}.
    \end{equation}
    So we are done.
\end{proof}

\section{Proof of approximate error interference for second-order Trotter formula}\label{apd:sec:proof-approximate-pf2-interference}
In this section, we present the theorem of approximate error interference for two-term PF2 when one term is the major term.
\begin{theorem}[Approximate error interference of PF2 of two terms when one term is the major term]\label{apd:thm:pf2-two-term}
      For $H = H_1 + H_2$ and PF2 formula $\pf_2(\tau) = e^{-\ii H_1 \tau /2} e^{-\ii H_2 \tau} e^{-\ii H_1 \tau/2}$, the Trotter error $\norm{\pf_2(t/r)^r - e^{-\ii Ht}} $ can be bounded as
    \begin{equation}\label{apd:eq:approx_pf2}
        \bigO \qty(\norm{[H_1, H_2]} \frac{t^3}{r^3} + \norm{[H_2, [H_1, H_2]]}\frac{t^3}{r^2} + 
        \cml \frac{t^4}{r^3}),
    \end{equation}
    where $\cml = \norm{[H_1, [H_2, [H_1, H_2]]]} + \norm{[H_2, [H_1, [H_1, H_2]]]}$.
\end{theorem}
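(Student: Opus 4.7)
The plan is to combine the BCH expansion of the two-term PF2 from the Methods with the approximate interference theorem (\cref{thm:approx-interference-bound}), by splitting the leading error $R$ into an orthogonality-friendly piece and a small residual.

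First I would start from the explicit formula $R = \frac{1}{24}\!\left([H_1,[H_1,H_2]] + [H_2,[H_2,H_1]]\right)$ given in the Methods and rewrite the sum using the elementary commutator identity
\begin{equation*}
[H_1,[H_1,H_2]] + [H_2,[H_2,H_1]] = [H_1-H_2,\,[H_1,H_2]].
\end{equation*}
Since $H_1-H_2 = H - 2H_2$, this yields the canonical decomposition $R = R_1 + R_2$ with $R_1 := \frac{1}{24}[H,[H_1,H_2]]$ and $R_2 := -\frac{1}{12}[H_2,[H_1,H_2]]$. The crucial structural observation is that $R_1 = [H, M]$ with $M := \frac{1}{24}[H_1,H_2]$, so $\braket{\psi_i|R_1|\psi_i} = 0$ on every eigenstate $\ket{\psi_i}$ of $H$; i.e., $R_1$ satisfies the orthogonality condition of \cref{def:orthogonality}, while $R_2$ generically does not.

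With this decomposition in hand, I would apply \cref{thm:approx-interference-bound} at order $p=2$, with the high-order remainder $R_{\re}$ controlled in norm by $\cml$ through the four-commutator bounds stated in the Methods. The theorem directly produces the last two terms of~\eqref{eq:approx_pf2} after substituting $\|R_2\| = \bigO(\|[H_2,[H_1,H_2]]\|)$ and $\|R_{\re}\| = \bigO(\cml)$. For the leading term involving $R_1$, I would sharpen the coarse estimate $\|R_1\| \lesssim \|H\|\|[H_1,H_2]\|$ by invoking instead the spectral quantity $\|\mathcal{R}_H^0(R_1)\|$ that governs the tight bound in \cref{thm:tight-general-bound}: because $R_1 = [H,M]$, its ``anti-derivative'' under $\mathrm{ad}_H$ is exactly $M$, so $\|\mathcal{R}_H^0(R_1)\| = \|M\| = \tfrac{1}{24}\|[H_1,H_2]\|$, producing the desired dependence on $\|[H_1,H_2]\|$ alone.

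The main obstacle I anticipate is careful bookkeeping of the cross-terms between the two perturbations inside the variation-of-parameters argument underlying \cref{thm:approx-interference-bound}, to ensure that the similarity-transform-style cancellation of $R_1$ does not leak additional $\|H\|$ factors into the bound when combined with the $R_2$-driven perturbation. No fundamentally new technique should be needed beyond the arguments developed in the Methods; the decomposition itself exposes the mechanism, with $R_2$ being the only piece that accumulates linearly in $t$ and the $R_1$ piece contributing only as a $t$-independent eigenvector rotation of norm $\bigO(\|[H_1,H_2]\|\, (t/r)^2)$. Collecting the three contributions then yields \eqref{eq:approx_pf2}.
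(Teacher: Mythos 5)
Your proposal is correct and follows essentially the same route as the paper's proof: the identical rewriting $R=\frac{1}{24}[H,[H_1,H_2]]-\frac{1}{12}[H_2,[H_1,H_2]]$, peeling off the non-orthogonal piece $R_2$ by the variation-of-parameters argument underlying \cref{thm:approx-interference-bound}, and treating the commutator piece via \cref{thm:tight-general-bound} with $\Delta_H^0(R_1)=0$ and $\calR_H^0(R_1)\approx\tfrac{1}{24}[H_1,H_2]$. Your bookkeeping of the $R_1$ contribution as a $t$-independent eigenvector rotation of size $\bigO\qty(\norm{[H_1,H_2]}(t/r)^2)$ is in fact the faithful reading of the tight bound (the paper's own write-up is slightly loose on that exponent), so no gap remains.
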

\begin{proof}
    According to the approximation of PF2 from BCH formula (cf., for example \cite{childsTheoryTrotterError2021}), we get
    \begin{align*}
        \norm{\pf_2(\dt) - \qty(e^{-\ii H \dt} + \frac{\ii}{24}[H_2, [H_2, H_1]] + \frac{\ii}{24}[H_1, [H_1, H_2]])} = \bigO \qty(\cml \dt^4).
    \end{align*}
    Notice that the leading term of error of $\pf_2$ can be written is the form of $[H_1+H_2, \calR]$ when either $H_1$ is much larger than $H_2$
    \begin{align*}
        \frac{1}{24}[H_2,[H_2,H_1]] + \frac{1}{24}[H_1, [H_1, H_2]] = \frac{1}{24}[H_1+H_2, [H_1, H_2]] - \frac{1}{12} [H_2, [H_1, H_2]].
    \end{align*}
    So we can approximate the error as
    \begin{align*}
        \norm{\pf(t/r)^r - \qty(e^{-\ii H t/r} + \frac{\ii}{24}[H_1+H_2, [H_1, H_2]])} \leq \frac{1}{12}\norm{[H_2, [H_1, H_2]]}\frac{t^3}{r^2} + \bigO\qty(\cml\frac{t^4}{r^3}).
    \end{align*}
    Notice that in the case when the error can be written in the form of commutator $R = [H_1+H_2, [H_1, H_2]]$, then we will have $\Delta_H^0(R) = 0$ and $\calR_H^0(R) = [H_1, H_2]$ where $\Delta_H^0(R)$ and $\calR_H^0(R)$ are defined in \cref{apd:thm:tight-general-bound}. Thus according to \cref{apd:thm:tight-general-bound}, we have
    \begin{align*}
        \norm{e^{-\ii(H - 1/24[H, [H_1, H_2]] (t/r)^2)t} - e^{-\ii Ht}} \leq \bigO\qty(\frac{t^3}{r^2} \norm{[H_1, H_2]}).
    \end{align*}
    So we are done by combining all the inequalities.
\end{proof}

\section{Numerical illustration of interference in Heisenberg model}
In this section, we will show some evidence that the error interferes with the first-order product formula simulation.

Recall the dynamics of 1D Heisenberg Hamiltonian
\begin{equation}
\begin{aligned}
    H_{\nn} &= H_X + H_Y + H_Z \\
    &= J_x \sum_{j=1}^{n-1}X_jX_{j+1} + J_y \sum_{j=1}^{n-1} Y_jY_{j+1} + \sum_{j=1}^{n-1} (J_z Z_jZ_{j+1} + hZ_j),
\end{aligned}
\end{equation}
can be implemented by PF1 with tri-group (XYZ)
\begin{align}
    \pf_1 (\dt) = e^{-\ii \dt H_X} e^{-\ii \dt H_Y} e^{-\ii \dt H_Z}.
\end{align}
Then the leading term of error is
\begin{equation}
\begin{aligned}
     R &= \frac 12 \qty([H_X, H_Y] + [H_X, H_Z] + [H_Y, H_Z]) \\
     &= \frac 12 [H_X+H_Y, H_{\nn}] + \frac 12 [H_X, H_Y].
\end{aligned}    
\end{equation}
Notice that the first part of $R$ is skew-symmetric in the eigenbasis of $H_{\nn}$ since it is a commutator of $H_{\nn}$ with some other matrices. The second part of $R$ can be expressed as
\begin{equation}
\begin{aligned}
    [H_X, H_Y] &= J_xJ_y \sum_j (X_jZ_{j+1}Y_{j+2} - Y_jZ_{j+1}X_{j+2}). 
\end{aligned}
\end{equation}
We claim that $\Tr [H_X, H_Y]H_{\nn}^r = 0$ for any $r \geq 1$, so $[H_X, H_Y]$ vanishes on the diagonal of eigenbasis of $H_{\nn}$ except for the degenerated ones.
\fi

\end{document}

%%%%%%%%%%%%%%%%%%%%%%%%%%%%%%%%%%%%%%%%%%%%%%%%%%%%%%%%%%%%%